\newcommand{\url}[1]{#1}
\DeclareMathOperator{\F}{\mathbb F}
\DeclareMathOperator{\wt}{wt}
\title{Polar Subcodes}
\author{Peter Trifonov
 and Vera Miloslavskaya
\thanks{The authors are with the Distributed Computing and Networking 
Department, Saint-Petersburg Polytechnic University, 
Polytechnicheskaya str., 21, office 103, 194021, Saint-Petersburg, Russia,
Email: petert@dcn.icc.spbstu.ru. }
\thanks{This work was partially presented at IEEE Information Theory Workshop
2013 and International Symposium on Information Theory and Its Applications
2014.}
\thanks{This work was partially supported by the Russian Foundation for Basic Research under the grant 12-01-00365-a and the Samsung Global Research Outreach
grant.}
}
\date{\today}
\theoremstyle{plain}
\newtheorem{theorem}{Theorem}
\newtheorem{definition}{Definition}
\newtheorem{example}{Example}
\newcommand{\mF}{\mathcal F}
\renewcommand{\F}{\mathbb F}
\newcommand{\set}[1]{\left\{{#1}\right\}}
\newcommand{\floor}[1]{\left\lfloor{#1}\right\rfloor}
\DeclareMathOperator{\Tr}{Tr}
\begin{document}

        \maketitle.
\begin{abstract}
An extension of polar codes is proposed, which allows some of the frozen
symbols, called dynamic frozen symbols,  to be data-dependent. 
A construction of  polar codes with dynamic frozen symbols, being subcodes of extended BCH codes, is proposed. The proposed codes have higher minimum distance than classical polar codes, but  
still can be efficiently decoded 
using the successive cancellation  algorithm and its  
extensions. The codes with Arikan,  extended BCH and Reed-Solomon kernel are considered.
The  proposed codes are shown to outperform LDPC and turbo codes, as well as polar codes with CRC.
\end{abstract}
%\begin{IEEEkeywords}
%Polar codes, generalized concatenated codes, BCH\ codes, Reed-Solomon codes.
%\end{IEEEkeywords}

\section{Introduction}
Polar codes were recently shown to be able to achieve the capacity of binary
input memoryless output-symmetric channels \cite{arikan2009channel}. Low-complexity construction,
encoding and decoding algorithms are available for polar codes. However, the performance of polar
codes  of moderate length appears to be quite poor.  This is both due to
suboptimality of the successive cancellation (SC)\ decoding algorithm and
low minimum distance of polar codes. The first problem can be solved by employing
list/stack SC decoding techniques \cite{tal2011list,niu2012stack,chen2013improved,miloslavskaya2014sequential,trifonov2013polar,miloslavskaya2014sequentialBCH},
which far outperform  the SC algorithm. Alternatively, one can use
the belief propagation algorithm   \cite{guo2014enhanced}. Its performance,
however, is still inferior to list/stack SC decoding.
 
The second problem  can be solved by constructing a generalized concatenated
code with inner polar codes \cite{trifonov2012efficient,mahdavifar2013construction,wang2014concatenations},
 or employing a serial concatenation of an error detecting or error correcting code and
a polar code \cite{bakshi2010concatenated,tal2011list,li2012adaptive,niu2012crcaided,guo2014enhanced}. However,
in the second case it is not clear how the parameters of the outer codes affect the minimum distance
and finite-length performance of the concatenated code.

It was shown recently that a sequence of linear codes achieves capacity on
a memoryless erasure channel under MAP decoding if their
blocklengths are strictly increasing, rates converge
to some $r\in (0, 1)$, and the permutation group of each code
is doubly transitive  \cite{kumar2015reed,kudekar2015reed}. This class of
codes includes Reed-Muller (RM) and extended primitive narrow-sense BCH (EBCH) codes. Observe that RM
codes can be considered as a special case of polar codes. On the other hand,
EBCH codes are known to have much higher minimum distance than comparable
RM codes, and are therefore likely to provide better finite length performance. However, there are still no efficient MAP decoding algorithms for these codes.

It was suggested  in \cite{dumer2006list} to construct subcodes of RM
codes, which can be efficiently decoded by a recursive list decoding algorithm.
In this paper we generalize this approach, and propose a code construction
"in between" polar codes and EBCH\ codes. The proposed codes can be efficiently decoded using the techniques developed in the area of polar coding, but provide much higher minimum distance, which can be accurately controlled.  The obtained codes outperform  state-of-the art LDPC, turbo and polar codes. More specifically,
in   Section \ref{sIGCC}
we introduce an extension of generalized concatenated codes (GCC), called
interlinked generalized concatenated codes (IGCC). Recursive application of this construction enables one to represent
a linear block code in a form which, in principle, enables its decoding
by the SC algorithm (Section \ref{sDFS}). This form, called polar codes with
dynamic frozen symbols, can be considered as a generalization of polar codes.
We show that EBCH\ codes are particularly well suited
for such representation, although their SC decoding is still not very efficient.
Furthermore, we present a special case of IGCC, called polar
subcodes, with good performance
under the SC algorithm and its derivatives (Section \ref{sPolarSubcodes}). The proposed codes are subcodes of
EBCH codes. We consider polar subcodes with
Arikan, EBCH\ and Reed-Solomon kernel. Simulation results presented in Section \ref{sNumeric} show that the proposed codes
outperform state-of-the-art polar, LDPC and turbo codes.

\section{Background}
\label{sBackground}
\subsection{Generalized concatenated codes}

A generalized concatenated code (GCC) \cite{blokh1974coding} over $\F_q$ is defined using a family of nested inner $(n,k_i,d_i)$ codes $\mathcal C_i: \mathcal C_0\supset \mathcal C_1\supset \dots \supset \mathcal C_{\nu-1}$, and a family of outer $(N,K_i,D_i)$ codes $\mathbb C_i$, where the $i$-th outer code is defined over $\F_{q^{k_i-k_{i+1}}}, 0\leq i<\nu,k_\nu=0$. It will be assumed in this paper that $k_i=k_{i+1}+1, \nu=n$. Let $\mathcal G$ be a $n\times n$ matrix, such that its rows $i,\dots,n-1$  generate code $\mathcal C_i$. GCC encoding is performed as follows. First, partition a data vector into $n$ blocks of size $K_i, 0\leq i<n$. Second,  encode these blocks with codes $\mathbb C_i$ to obtain codewords $(\tilde c_{i,0},\dots,\tilde c_{i,N-1})$. Finally, multiply vectors $(\tilde c_{0,j},\dots,\tilde c_{n-1,j}), 0\leq j<N,$  by $\mathcal G$ to obtain a GCC codeword $(c_{0,0},\dots,c_{n-1,0},c_{0,1},\dots,c_{n-1,N-1})$. Figure \ref{fGCC} illustrates this construction.  A GCC generator matrix can be obtained as $$G=\begin{pmatrix}G^{(0)}\otimes \mathcal G_{0,-}\\G^{(1)}\otimes \mathcal G_{1,-}\\\vdots\\G^{(n-1)}\otimes \mathcal G_{n-1,-}\end{pmatrix},$$
where $G^{(i)}$ is a generator matrix of $\mathbb C_i$, and $\mathcal G_{i,-}$ denotes the $i$-th row of $\mathcal G$.  It is possible to show that this encoding method results in a $(Nn,\sum_{i=0}^{n-1}K_i,\geq \min_i d_iD_i)$ linear block code.
\begin{figure}
\includegraphics[width=0.45\textwidth]{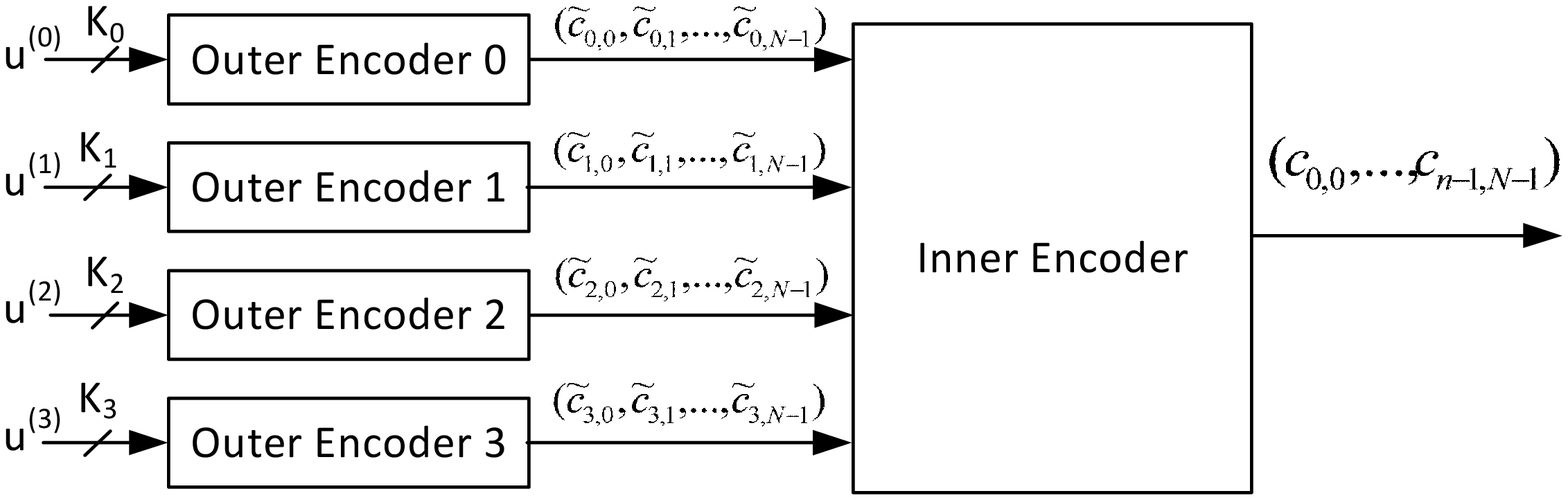}
\caption{Encoding with a generalized concatenated code   }
\label{fGCC}
\end{figure}

\begin{figure}
\includegraphics[width=0.45\textwidth]{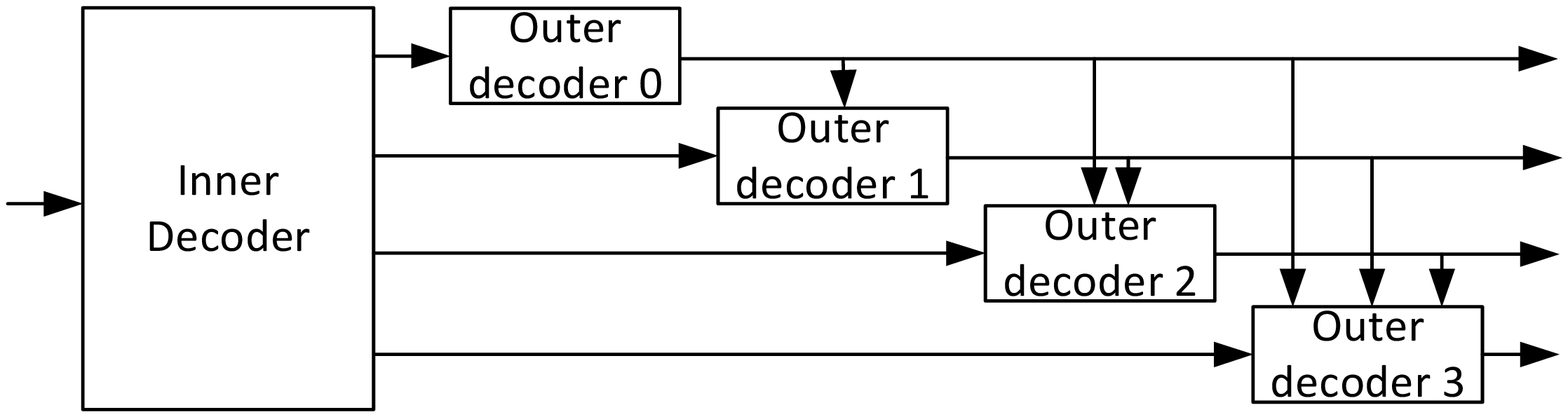}
\caption{Multistage decoding     }
\label{fMSD}
\end{figure}
    
GCC can be decoded with a multistage decoding (MSD) algorithm \cite{imai77new,takata1994suboptimum,wachsmann99multilevel}. For $i=0,1,\dots,N-1$, this algorithm takes as input noisy instances $y_{t,j}$ of codeword symbols $c_{t,j}, 0\leq t<n, 0\leq j<N$,  successively computes estimates of $\tilde c_{i,j}$ using a SISO decoder of $\mathcal C_i$, and passes these estimates
to a decoder of $\mathbb C_i$ to recover the corresponding codeword. Then it proceeds with decoding of $\mathcal C_{i+1}$ and $\mathbb C_{i+1}$, as shown in Figure \ref{fMSD}. The performance of the MSD algorithm depends strongly on parameters of outer codes. An extensive survey of various methods for their selection can be found in \cite{wachsmann99multilevel}.

\subsection{Polar codes}
$(n=l^m,k)$ polar code over $\F_q$ is a linear block code generated by $k$ rows of 
matrix $A=B_{l,m}F_{l}^{\otimes m}$, where $B_{l,m}$ is the digit-reversal permutation matrix, $F_l$ is a $l\times l$ matrix called kernel (e.g. $F_2=\begin{pmatrix}1&0\\1&1\end{pmatrix}$ is the Arikan kernel), and $\otimes m$ denotes 
$m$-times Kronecker product of the matrix with itself \cite{arikan2009channel}.  The digit-reversal permutation maps integer
  $i=\sum_{j=0}^{m-1}i_jl^j, 0\leq i_j<l,$ onto $\sum_{j=0}^{m-1}i_jl^{m-1-j}$.
  The particular rows 
to be used in a generator matrix are selected so that the  error 
probability under the below described successive cancellation (SC)\ decoding algorithm is minimized. Hence, a codeword of a classical polar code is obtained as 
$c=uA$, where $u_i=0, i\in \mathcal F,$ and $\mathcal F\subset\set{0,\ldots,n-1}$ is the 
set of $n-k$ frozen symbol indices.  It is possible to show that matrix $A$
transforms the original binary input memoryless output-symmetric channel
$W_1^{(0)}(y|c)$ into bit subchannels  $W_n^{(i)}(y_0^{n-1},u_0^{i-1}|u_i)$,
the capacities of these subchannels converge with $m$ to $0$ or $1$ symbols
per channel use, and the fraction
of subchannels with capacity close to $1$ converges to the capacity of $W_1^{(0)}(y|c)$.
Here $a_s^t=(a_s,\ldots,a_t)$, and $y_0,\ldots,y_{n-1}$ are the noisy symbols obtained by transmitting codeword symbols $c_0,\ldots,c_{n-1}$ over a 
binary input memoryless output-symmetric channel $W_1^{(0)}(y|u)$.

The SC decoding algorithm at phase $i$
computes 
$W_{n}^{(i)}(y_0^{n-1},u_0^{i-1}|u_i), 
u_i\in \F_q$ (or $W_n^{(i)}(u_0^i|y_0^{n-1})$, which is
more convenient for implementation), and  makes 
decisions 
\begin{equation}
\label{mSCDecisionRule}
\widehat{u}_i=\begin{cases} \arg \max_{u_i\in\F_q} W_{n}^{(i)}(u_0^{i}|y_0^{n-1}),& i\not \in \mathcal F \\0,&\text{otherwise}.\end{cases}
\end{equation}
 This decision 
is used at subsequent steps instead of the true value of $u_i$ to determine the values of 
$u_{i+1},\ldots,u_{n-1}$.  It was shown in \cite{arikan2009channel} that 
these calculations can be implemented with complexity $O(n\log n)$.  For
example, in the case of $l=2,q=2$ these probabilities can be computed as 
\begin{align}
\label{mSCProb1}
\begin{split}
\lefteqn{W_n^{(2i)}({u}_0^{2i}|y_0^{n-1})=}&\\
&\displaystyle\sum_{u_{2i+1}=0}^1 W_{\frac{n}{2}}^{(i)}
({u}_{0,even}^{2i+1} \oplus { u}_{0,odd}^{2i+1}|y_0^{\frac{n}{2}-1})W_{\frac{n}{2}}^{(i)}( u_{0,odd}^{2i+1} |y_\frac{n}{2}^{n-1})
\end{split}
\\
\label{mSCProb2}
\begin{split}
\lefteqn{W_n^{(2i+1)}({u}_0^{2i+1}|y_0^{n-1})=}&\\
& W_{\frac{n}{2}}^{(i)}
({u}_{0,even}^{2i+1} \oplus { u}_{0,odd}^{2i+1}|y_0^{\frac{n}{2}-1})W_{\frac{n}{2}}^{(i)}( u_{0,odd}^{2i+1} |y_\frac{n}{2}^{n-1}).
\end{split}
\end{align}

For $q=2$ the Bhattacharyya parameters $Z_{n,i}$ of the bit subchannels  $W_n^{(i)}(y_0^{n-1},u_0^{i-1}|u_i)$ satisfy \cite{korada2010polar}
\begin{equation}
Z_{n/l,i}^{\Delta_{j}}\leq Z_{n,il+j}\leq 2^{l-{j}}Z_{n/l,i}^{\Delta_{j}},
\end{equation}
where $Z_{1,0}$ is the Bhattacharryya parameter of the original binary memoryless symmetric
channel, and $\Delta_i,0\leq i<l,$ are partial distances of matrix $F_l$.  Similar bounds are provided in \cite{mori2014source} for $q>2$. For the case of $l=2$ (Arikan kernel), one can obtain more precise estimates as \cite{arikan2009channel,hassani2010scaling}
\begin{align}
\label{mArikanBhat1}
Z_{n/2,i}\sqrt{2-Z_{n/2,i}^2}\leq Z_{n,2i}\leq& 2Z_{n/2,i}-Z_{n/2,i}^2\\
\label{mArikanBhat2}
Z_{n,2i+1}=&Z_{n/2,i}^2.
\end{align}
Furthermore, for the case of the binary erasure channel, one has $Z_{n,2i}= 2Z_{n/2,i}-Z_{n/2,i}^2$. 

Let $P_i=1-P\set{C_i|C_0,\dots,C_{i-1}}$ be the error probability of symbol $u_i$ under SC decoding, where $C_i$ is the event corresponding to correct estimation of symbol $u_i$. Then the SC decoding error probability is given by 
\begin{equation}
\label{mSCErrProb}
P=1-\prod_{i\notin \mF}(1-P_i).
\end{equation}
Efficient techniques are available for computing $P_i$ in the case of Arikan kernel \cite{tal2011how,trifonov2012efficient}. The standard way to construct practical polar codes is to select $\mF$ as the set of $n-k$ indices $i$  with the highest error probability $P_i$.
 
For any $s,0< s\leq m$, polar codes can be considered as GCC with inner codes generated by the rows of matrix $F_l^{\otimes s}$,  and outer codes generated by some submatrices of $F_l^{\otimes (m-s)}$.  The SC decoding algorithm can be considered as an instance of the MSD method, where symbol-by-symbol decoding of outer codes is used. It was shown in \cite{trifonov2012efficient} that significant performance improvement can be achieved by employing near-ML\ decoding algorithms for outer codes. Even better performance can be obtained by employing list or stack decoding algorithms \cite{tal2011list,niu2012stack,chen2013improved,trifonov2013polar,miloslavskaya2014sequential}.
These algorithms keep track of a number of vectors  $\hat u_0^{i-1}$, and at each
step increase the length of one or more of these vectors by $1$, and compute
probabilities $W_n^{(i)}(\hat u_0^{i}|y_0^{n-1})$ (or related values).  Vectors with low probabilities are discarded, so that there are
at most $L$ vectors of length $i$ for each $i$. The worst-case complexity
of these algorithms is $O(L n\log n)$.

\subsection{BCH codes}
An $(n=q^m,k,\geq d)$ extended primitive narrow-sense BCH  (EBCH) code is a set of vectors
$c_0^{n-1}\in \F_q^n$, such that $\sum_{i=0}^{n-1}c_ix_i^j=0, 0\leq j<d-1,$
where $(x_0,\dots,x_{n-1})$ is a vector of distinct values of $\F_{q^m}$,
called code locators. Setting $x_0=0,x_i=\alpha^{i-1}, 1\leq
i<n$, results in an extended cyclic code with generator polynomial $g(x)=LCM(M_1(x),\dots,M_{d-2}(x))$,
where $M_i(x)$ is a minimal polynomial of $\alpha^i$, and $\alpha$ is a primitive
element of $\F_{q^m}$. However, in this paper, unless
stated otherwise, it will be assumed that $x_i$ are arranged in the standard
digit order, where $x_i=\sum_{j=0}^{m-1}X_{i,j}\beta_j$, $i=\sum_{j=0}^{m-1}X_{i,j}q^j,
X_{i,j}\in\set{0,\dots,q-1}$, and $\beta_0,\dots,\beta_{m-1}$ is some basis of $\F_{q^m}$.
\section{Interlinked Generalized Concatenated Codes}
\label{sIGCC}
\subsection{The construction}
In this section we present an extension of the generalized concatenated codes, called interlinked GCC (IGCC). This extension can be used to represent a broad class of linear block codes. It enables one to decode such codes using the techniques developed in the area of generalized concatenated and multilevel coding. These  decoding algorithms, however, are not guaranteed to perform
well for an arbitrary IGCC. But it will be shown below how to construct IGCC with good performance under MSD (actually, SC) and its list extensions.  

\begin{figure}
\begin{center}
\includegraphics[width=0.45\textwidth]{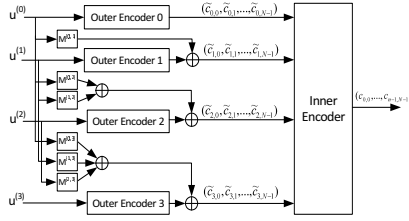}
\caption{Interlinked generalized concatenated code     }
\label{fIGCC}
\end{center}
\end{figure}

Interlinked GCC encodes the subvector $u^{(i)}\in \F_q^{K_i}$ of the data vector not with the outer code $\mathbb C_i$, as in the classical GCC, but with its  coset given by $\mathbb C_i+\left(\sum_{s=0}^{i-1} u^{(i)}M^{(s,i)}\right)$, where $M^{(s,i)}\in \F_q^{K_s\times N}$ are some matrices, as shown in Figure \ref{fIGCC}. This results in a linear block code of length $Nn$ and dimension $\sum_{i=0}^{n-1}K_i$. It is, however, quite difficult to estimate the minimum distance of the obtained code.  Obviously, for any pair of non-negative integers $\eta,k: \eta >k,$ if there exists a  $(\eta,k,d)$ GCC, then there  also exists a $(\eta,k,\geq d)$ IGCC. 

The MSD algorithm  can be used to decode IGCC.
However, one needs to perform decoding not in outer codes, but in their
cosets. This can be done with any decoder for $\mathbb
C_i$, provided that its input LLRs are appropriately adjusted.
\subsection{Generalized Plotkin decomposition of linear codes}
As a special case of IGCC, which corresponds to the case of inner codes generated
by rows of  $\mathcal G=F_2=\begin{pmatrix}1&0\\1&1\end{pmatrix}$, we consider an extension of the classical Plotkin construction. This extension will be used below to derive a generalization of Arikan polar codes.

\begin{theorem}
Any  linear $(2n,k,d)$ code $\mathcal C$ has a generator matrix given by 
\begin{equation}
\label{mGenPlotkin}
G=\begin{pmatrix}
I_{k_1}&0&\tilde I \\
0& I_{k_2}&0\\
\end{pmatrix}\begin{pmatrix}
G_1&0\\
G_2&G_2\\
G_3&G_3\\
\end{pmatrix},
\end{equation}
where $I_l$ is a $l\times l$ identity matrix,  $G_i, 1\leq i\leq 3,$   are  $k_i\times n$ matrices,  $k=k_1+k_2$,  and $\tilde I$ is obtained by stacking a $(k_1-k_3)\times k_3$ zero matrix and $I_{k_3}$ , where $k_3\leq k_1$.
\end{theorem}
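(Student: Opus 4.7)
The plan is to read the three blocks directly off intrinsic subspaces of $\mathcal C$. Writing codewords as $(a \mid b)$ with $a,b \in \F^n$, I would introduce the repetition part $V = \{v \in \F^n : (v \mid v) \in \mathcal C\}$, the first-half-only part $A_0 = \{a \in \F^n : (a \mid 0) \in \mathcal C\}$, and the second-half projection $B = \pi_2(\mathcal C) \subseteq \F^n$. The goal is to choose $G_2$ as a basis of $V$, the first $k_1 - k_3$ rows of $G_1$ as a basis of $A_0$, and $G_3$ as a basis of a direct-sum complement $Z$ of $V$ inside $B$; the remaining $k_3$ rows of $G_1$ will come from lifts of the rows of $G_3$.

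The dimension bookkeeping is the first thing I would verify. The maps $v \mapsto (v \mid v)$ and $a \mapsto (a \mid 0)$ embed $V$ and $A_0$ into $\mathcal C$ with zero intersection, so $\dim A_0 + \dim V \leq k$. Setting $k_2 = \dim V$, $k_1 = k - k_2$, and $k_3 = k_1 - \dim A_0$ yields $0 \leq k_3 \leq k_1$ as the theorem requires. Since $\pi_2$ is surjective with kernel $\{(a\mid 0)\in\mathcal C\}\cong A_0$, I get $\dim B = k - \dim A_0 = k_2 + k_3$, and because $V \subseteq B$ a complement $Z$ with $B = V \oplus Z$ and $\dim Z = k_3$ exists.

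Next I would pick generator matrices $G_2$ of $V$, $G_1^{(0)}$ of $A_0$ (of size $(k_1 - k_3) \times n$), and $G_3$ of $Z$. For each row $z_i$ of $G_3$ choose a lift $a_i \in \F^n$ with $(a_i \mid z_i) \in \mathcal C$, assemble $G_1^{(1)}$ from the rows $a_i - z_i$, and set $G_1 = \bigl(\begin{smallmatrix} G_1^{(0)} \\ G_1^{(1)} \end{smallmatrix}\bigr)$. The vectors
\[
(G_1^{(0)} \mid 0),\qquad (G_1^{(1)} + G_3 \mid G_3),\qquad (G_2 \mid G_2)
\]
all lie in $\mathcal C$ by construction; linear independence follows by reading the second halves (where independence reduces to $B = V\oplus Z$ plus independence of the rows of $G_2$ and $G_3$) and then using independence of the rows of $G_1^{(0)}$ in $A_0$. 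The count $(k_1{-}k_3)+k_3+k_2 = k$ closes the argument, so the list is a basis of $\mathcal C$. With $\tilde I = \bigl(\begin{smallmatrix} 0_{(k_1-k_3)\times k_3} \\ I_{k_3} \end{smallmatrix}\bigr)$, a direct block-multiplication check shows that the right-hand side of \eqref{mGenPlotkin} produces exactly this basis.

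I expect the only delicate step to be the choice of $a_i - z_i$, rather than $a_i$ itself, for the bottom rows of $G_1$. This subtraction is what lets the top-left identity block of the outer factor recombine with $\tilde I G_3$ to reproduce the correct first half $G_1^{(1)} + G_3 = a_i$, while leaving the top $k_1 - k_3$ rows of $G_1$ free to generate $A_0$. Everything else is a matter of dimension counting and block-matrix arithmetic.
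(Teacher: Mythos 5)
Your proof is correct and follows essentially the same route as the paper: your $V$, $A_0$, and $Z$ are precisely the row spaces of the paper's $G_2$ (obtained there as the maximum-rank solution of $G_2(H'+H'')^T=0$, which is exactly the condition $(v\mid v)\in\mathcal C$), of $G_5$, and of $G_3$, and your replacement of the lift $a_i$ by $a_i-z_i$ is the paper's substitution of $G_4-G_3$ for $G_4$. The only difference is presentational --- you work with intrinsic subspaces and an explicit dimension count where the paper invokes a parity-check kernel and Gaussian elimination --- which makes the rank bookkeeping more transparent but does not change the argument.
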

\begin{proof}
%!!!Доказать, что k_1<=n!!!
Let $\tilde G=\begin{pmatrix}G'&G''\end{pmatrix}$, where $G'$ and $G''$ are some $k\times n$ matrices, be a generator matrix of the code, and let $\tilde H=\begin{pmatrix}H'&H''\end{pmatrix}$ be the corresponding parity check matrix. Let $G_2$ be a maximum rank solution of matrix equation $G_2(H'+H'')^T=0$. Gaussian elimination can be used to construct  matrix $G=Q\tilde G=\begin{pmatrix}
G_5&0\\
G_4&G_3\\
G_2&G_2\\
\end{pmatrix}$, such that $Q$ is an invertible matrix, rows of $G_3$ are linearly independent with rows of $G_2$, and $k=k_2+k_3+k_5$.
It can be seen that 
\begin{equation}
\label{mCodeDecompDetailed}
G=
\begin{pmatrix}
I_{k_5}&0&0&0\\
0&I_{k_3}&0&I_{k_3}\\
0&0&I_{k_2}&0\\
\end{pmatrix}\begin{pmatrix}
G_5&0\\
G_4-G_3&0\\
G_2&G_2\\
G_3&G_3\\
\end{pmatrix}.
\end{equation}
Then the  statement  follows by setting $G_1=\begin{pmatrix}
G_5\\
G_4-G_3
\end{pmatrix}$.
\end{proof}
Another way to construct  $G_1$ is to compute $G'+G''$, and eliminate linearly dependent rows from the obtained matrix. 

Classical Plotkin concatenation of two codes corresponds to the case of $k_3=0$,
so the representation of a generator matrix in the form \eqref{mGenPlotkin} will be referred to as a generalized Plotkin decomposition (GPD) of $G$ or the corresponding code $\mathcal C$. 
Applying the GPD to equivalent codes may result in codes $\mathcal C_1, \mathcal C_2$ with  different dimensions and performance.  
\begin{figure*}[t]
\begin{equation*}
u_0^{15}
\begin{small}
\left(
\setlength{\arraycolsep}{1pt}
\begin{array}{cccccccccccccccc}
1&0&0&0&0&0&0&0&0&0&0&0&0&0&0&0\\    
1&0&0&0&0&0&0&0&1&0&0&0&0&0&0&0\\    
1&0&0&0&1&0&0&0&0&0&0&0&0&0&0&0\\    
1&0&0&0&1&0&0&0&1&0&0&0&1&0&0&0\\    
1&0&1&0&0&0&0&0&0&0&0&0&0&0&0&0\\    
1&0&1&0&0&0&0&0&1&0&1&0&0&0&0&0\\    
1&0&1&0&1&0&1&0&0&0&0&0&0&0&0&0\\    
1&0&1&0&1&0&1&0&1&0&1&0&1&0&1&0\\    
1&1&0&0&0&0&0&0&0&0&0&0&0&0&0&0\\    
1&1&0&0&0&0&0&0&1&1&0&0&0&0&0&0\\    
1&1&0&0&1&1&0&0&0&0&0&0&0&0&0&0\\    
1&1&0&0&1&1&0&0&1&1&0&0&1&1&0&0\\    
1&1&1&1&0&0&0&0&0&0&0&0&0&0&0&0\\    
1&1&1&1&0&0&0&0&1&1&1&1&0&0&0&0\\    
1&1&1&1&1&1&1&1&0&0&0&0&0&0&0&0\\    
1&1&1&1&1&1&1&1&1&1&1&1&1&1&1&1\\    
\end{array}
\right)
\setlength{\arraycolsep}{1pt}
%\underbrace{
\begin{pmatrix}
1&0&0\\
1&1&1\\
1&\alpha&\alpha^3\\
1&1+\alpha&(1+\alpha)^3\\
1&\alpha^2&1+\alpha+\alpha^2+\alpha^3\\
1&1+\alpha^2&1+\alpha\\
1&\alpha(1+\alpha)&1+\alpha^2\\
1&1+\alpha+\alpha^2&1+\alpha+\alpha^2+\alpha^3\\
1&\alpha^3&1+\alpha^2\\
1&1+\alpha^3&1+\alpha\\
1&\alpha(1+\alpha^2)&1\\
1&1+\alpha+\alpha^3&1\\
1&\alpha^2(1+\alpha)&1+\alpha\\
1&1+\alpha^2+\alpha^3&\alpha^3\\
1&\alpha(1+\alpha+\alpha^2)&1+\alpha^2\\
1&1+\alpha+\alpha^2+\alpha^3&\alpha^3
\end{pmatrix}
%}_{H^T}
\end{small}
=0.
\end{equation*}
\hrulefill
\end{figure*}

\begin{example}
\label{exGPD}
Consider a $(16,7,6)$ EBCH\ code generated by
\setlength{\arraycolsep}{1pt}
 $$G=\left(
\begin{array}{cccccccc|cccccccc}
1&0&0&1&0&1&1&0&1&0&0&1&0&1&1&0\\
0&1&0&1&0&1&0&1&0&1&0&1&0&1&0&1\\
0&0&1&1&0&0&1&1&0&0&1&1&0&0&1&1\\
0&0&0&0&1&1&1&1&0&0&0&0&1&1&1&1\\\hline
0&0&1&0&1&0&1&1&0&0&0&1&1&0&0&0\\
1&0&0&0&0&0&1&0&1&1&0&1&1&0&0&0\\
1&1&1&1&1&1&1&1&0&0&0&0&0&0&0&0
\end{array}\right).
$$
Its GPD is given by  $G_1=\begin{pmatrix}
0& 0& 1& 1& 0& 0& 1& 1\\
0& 1& 0& 1& 1& 0& 1& 0\\
1& 1& 1& 1& 1& 1& 1& 1
\end{pmatrix}, G_2=\begin{pmatrix}
1&0&0&1&0&1&1&0\\
0&1&0&1&0&1&0&1\\
0&0&1&1&0&0&1&1\\
0&0&0&0&1&1&1&1
\end{pmatrix}, G_3=\begin{pmatrix}
0&0&0&1&1&0&0&0\\
1&1&0&1&1&0&0&0
\end{pmatrix}
$. 
%Observe that $G_1$ generates a code which is equivalent to a concatenation
%of  $(4,3,2)$ and $(2,1,2)$ codes, and $G_2$ generates a $(8,4,4)$ code.
\end{example}

GPD enables one to perform hard-decision decoding of code $\mathcal C$ as follows. Let $\mathcal C_i$ be the code generated by $G_i$. Consider a noisy codeword  $(y'|y'')=(c'|c'')+(e'|e'')$, where $e=(e'|e'')$ is an error vector. Compute $z=y'+y''=(c'+c'')+(e'+e'')$. One can decode
$z$ in  $\mathcal C_1$ to identify information vector $u'$ and codeword $c'+c''=u'G_1$. If this step is completed successfully, one can compute $\tilde y'=y'-u'(G_1+\tilde IG_3)$ and $\tilde y''=y''-u'\tilde I G_3$, and try to decode these vectors in $\mathcal C_2$. 
This algorithm can be easily tailored to implement soft-decision decoding.

One can see from \eqref{mGenPlotkin} that $\mathcal C_2$ has minimum distance $d_2\geq d/2$. However,  $d_1$ can be very low. Hence, the above described  algorithm may fail to correct even $t\leq \floor{(d-1)/2}$ errors. A workaround for this problem is to employ list decoding for $\mathcal C_1$ to identify a number of possible vectors $u'$,  for each of them decode the corresponding vectors $\tilde y',\tilde y''$ in $\mathcal C_2$, and select the codeword $(c'|c'')$ closest to the received sequence.

GPD may be also applied recursively. This results
in codes of length $1$  and dimension at most $1$, as discussed below. 

\section{Dynamic Frozen Symbols}
\label{sDFS}

\subsection{Representation of a linear code for SC decoding}

Consider an $(n=l^m,k,d)$ code $\mathcal C$ over $\F_q$ with check matrix $H$. Let $A=B_{l,m}F_l^{\otimes m}$ be a  matrix of an $n\times n$ polarizing transformation. Since $A$ is invertible, any vector of length $n$ can be obtained as an output $c_0^{n-1}=u_0^{n-1}A$ of the polarizing transformation. Let us investigate the constraints which need to be imposed on $u_0^{n-1}$, so that the output of the polarizing transformation is a codeword of $\mathcal C$.

These constraints are given by the equation 
 $u_0^{n-1}AH^T=0$.
By applying Gaussian elimination, one can construct the {\em constraint matrix} $V=QHA^T$, where $Q$ is an invertible matrix, such that all rows of $V$ end in distinct columns, i.e. the values $j_i=\max \set{t|V_{i,t}\neq 0}, 0\leq
i<n-k$ are distinct. It can be assumed without loss of generality that $V_{i,j_i}=-1$. Let $\mF=\set{j_i|0\leq
i<n-k}$. Then one obtains \begin{equation}
\label{mDynFrozen}
u_{j_i}=\sum_{s=0}^{j_i-1}u_sV_{i,s}, 0\leq i<n-k.
\end{equation}
These equations  can be considered as a generalization of the concept of frozen symbols, i.e. constraints of the form $u_{j_i}=0, j_i\in \mF$, used in the construction of polar codes. Observe that symbols $u_{j_i}, j_i\in \mF$ can take arbitrary values, which, however, depend on the values of some other symbols with smaller indices. Therefore, symbols $u_{j_i}$ given by \eqref{mDynFrozen} will be referred to as dynamic frozen symbols.

\begin{example}
\label{ex1676}
Consider $(16,7,6)$ EBCH code $\mathcal C$ over $\F_2$. The generator polynomial of the corresponding non-extended code has roots $\alpha, \alpha^3$ and their conjugates, where $\alpha$ is a primitive root 
of $x^4+x^3+1$.
The constraints on 
vector $u_0^{15}$, such that $u_0^{15}A\in \mathcal C$, 
are given by the equation at the top of this page. 
%\left(
%\setlength{\arraycolsep}{1pt}
%\begin{array}{cccccccccccccccc}
%\end{array}\right)\end{tiny}=0
%1+a+a^2+a^3&a+a^2+a^3&1+a^2+a^3&a^2+a^3&1+a+a^3&a+a^3&1+a^3&a^3&1+a+a^2&a+a^2&1+a^2&a^2&1+a&a&1&0\\ 
%a^2+a^3&a^3&a+a^3&a^3&a^2+a^3&1+a+a^2+a^3&1+a+a^2+a^3&a+a^3)&1&1&a+a^3&a^2+a^3&1+a+a^2+a^3&a^3&1&0\\
%1&0&0&0&0&0&0&0&0&0&0&0&0&0&0&0
Multiplying matrices, expanding their elements in the standard basis and 
applying elementary linear operations, one obtains
\begin{equation*}
u_0^{15}
\left(\setlength{\arraycolsep}{1pt}
\begin{array}{cccccccccccccccc}
0&0&0&0&0&0&0&0&0&0&1&0&1&0&0&0\\
0&0&0&1&0&1&0&0&0&0&1&0&0&0&0&0\\
0&0&0&0&0&1&0&0&0&1&0&0&0&0&0&0\\
0&0&0&0&0&0&0&0&1&0&0&0&0&0&0&0\\
0&0&0&1&0&0&1&0&0&0&0&0&0&0&0&0\\
0&0&0&0&1&0&0&0&0&0&0&0&0&0&0&0\\
0&0&1&0&0&0&0&0&0&0&0&0&0&0&0&0\\
0&1&0&0&0&0&0&0&0&0&0&0&0&0&0&0\\
1&0&0&0&0&0&0&0&0&0&0&0&0&0&0&0
\end{array}
\right)^T=0
\end{equation*}
This means that $u_0=u_1=u_2=u_4=u_8=0$, and 
$u_6=u_3$, $u_9=u_5$, $u_{10}=u_3+u_5$, $u_{12}=u_{10}$. Symbols
$u_3,u_5, u_7,u_{11},u_{13},u_{14}, u_{15}$ are
non-frozen.
\end{example}

Observe that any linear code of length $l^m$   can be represented by a system of equations \eqref{mDynFrozen}. This enables one to employ the SC decoding algorithm and its variations for decoding of arbitrary linear codes of length $l^m$. That is,
one can successively make decisions
\begin{equation}
\label{mSCDecisionRuleDF}
\widehat{u}_i=\begin{cases} \arg \max_{u_i\in\F_q} W_{n}^{(i)}(u_0^i|y_0^{n-1}),& i\not \in \mathcal F \\\sum_{s=0}^{i-1}u_sV_{t_i,s},&\text{otherwise},\end{cases}
\end{equation}
where $t_i$ is an integer, such that $j_{t_i}=i$. Observe that if $u_0^{i-1}$
are the correct values of the input symbols of the polarizing transformation,
the probability of symbol error $P_i$ in this case remains the same as in the case of classical polar codes. Hence, the error probability of the
considered code under
SC decoding can be still computed via \eqref{mSCErrProb}.

The set  $\mF$ of dynamic frozen symbol indices for a generic linear code is not guaranteed to contain all symbols with high error probability.  Hence, for most linear codes the SC decoding error probability,  given by \eqref{mSCErrProb}, far exceeds the error probability of other decoding algorithms. Substantially better performance can be obtained by employing list or stack SC decoding techniques. However, the list size (i.e. the decoding complexity) needed to obtain near-ML\ performance, in general, increases exponentially with code dimension.

The complexity of computing $W_{n}^{(i)}(u_0^i|y_0^{n-1})$ is exactly the
same as in the case of classical polar codes, i.e. $O(n\log n)$. However, evaluation of the
expression \eqref{mDynFrozen} may increase the decoding complexity to $O(n^2)$.

\subsection{Extended BCH\ codes and Arikan kernel}
Let us investigate in more details the structure of the set of dynamic frozen symbol indices of binary extended primitive narrow-sense BCH (EBCH) codes for the case of Arikan polarizing transformation. Observe that in this case construction of the system of equations \eqref{mDynFrozen} can be viewed as a recursive application of the GPD to the considered code.

It was shown in \cite{kolesnik1968cyclic,kasami1968new,delsarte1970generalized} that 
a punctured RM code of order $r$ and length $2^m$ is equivalent to a  cyclic code 
 with generator polynomial  $g(x)$ having roots $\alpha^i: 1\leq 
\wt(i)<m-r, 1\leq i\leq 2^m-2$, where $\alpha$ is a primitive element of $\F_{2^m}$, and $\wt(i)$ is the number of non-zero digits in the binary expansion of integer $i$.  Furthermore, it was shown in \cite{zinoviev1986dual} that an EBCH code $\mathcal C'$ of length $2^m$ with design distance $d\geq \delta(r,m)+3$ is a subcode of the RM code of order $m-r-1$, where $$\delta(r,m)=\max_{i:\wt(i)=r}\min\set{i2^j\bmod (2^m-1)|0\leq j<m}.$$
A recursive expression for $\delta(r,m)$ is derived in \cite{zinoviev1986dual}. One can consider a RM code of order $m-r-1$ as a polar code with the set of frozen symbol indices $\mF''=\set{i|\wt(i)\leq r}$. Hence,  the set of dynamic frozen symbol indices $\mF'$ for the EBCH code includes $\mF''$. It can be seen from \eqref{mArikanBhat1}--\eqref{mArikanBhat2} that $Z_{n,i}=O(Z_{1,0}^{2^{\wt(i)}})$. Hence, the set of  frozen symbols for EBCH codes includes all those ones, such that their Bhattacharyya parameters (and error probability $P_i\leq \frac{1}{2}Z_{n,i}$) decrease  slowly while decreasing the Bhattacharyya parameter $Z_{1,0}$ of the original channel. Most of these symbols have high error probability $P_i$. 

The above statement is true only if one employs standard digit ordering. That is, each coordinate $c_i$ of a codeword $(c_0,\dots,c_{n-1})$ can be associated with some $x_i\in \F_{2^m}$, so that all $x_i$ are distinct, and all codewords satisfy check equations
\begin{equation}
\label{mCheckEq}
\sum_{i=0}^{n-1}c_ix_i^j=0, 0\leq j<d-1.
\end{equation}
The standard digit ordering is given by $x_i=\sum_{j=0}^{m-1}X_{i,j}\beta_j$, where $i=\sum_{j=0}^{m-1}X_{i,j}2^j, X_{i,j}\in \set{0,1}$, and $\beta_0,\dots,\beta_{m-1}$ is some basis of $\F_{2^m}$. In what follows, more detailed characterization of the set of dynamic frozen symbols for EBCH\ codes will be derived.

Let $$C_t=\set{t2^i\bmod 2^{m-1}|0\leq i<m_t, t2^{m_t}\equiv t\bmod 2^m-1}$$ be a cyclotomic coset generated by $t$. Let $\mathcal Q$ be the set of minimal
cyclotomic coset representatives. It can be seen that all elements of a cyclotomic coset have the same weight. Therefore $$\sum_{\substack{s\in \mathcal Q\\\wt(s)=r}}m_s={m\choose r}.$$

\begin{theorem}
\label{tFrozenWeight}
Consider a $(2^m,k,d)$ extended primitive  narrow-sense BCH code over $\F_2$. Let $S=\set{i\in \mathcal Q|0\leq i<d-1}$. Let $N_t$ be the number of dynamic frozen symbols  $u_i$ for this code, such that $\wt(i)=t$. Then $N_t=\sum_{s\in S_t}m_s$, where $S_t=\set{s\in S|\wt(s)=t}$, and $m_s$ is the size of the cyclotomic coset generated by $s$.
\end{theorem}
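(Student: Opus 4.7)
The plan is to proceed in three stages: translate each BCH check equation into a constraint on the polar input $u$, establish a support-vanishing lemma via a Moore-matrix computation, and combine these with a weight filtration to count the pivots.

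First, substituting $c=uA$ into each BCH parity check $\sum_i c_i x_i^s=0$ yields $\sum_k u_k g_s(k)=0$ with $g_s(k):=\sum_i A_{k,i}x_i^s$. Since this equation takes values in the Frobenius-fixed subfield $\F_{2^{m_s}}\subseteq\F_{2^m}$, it is equivalent to exactly $m_s$ $\F_2$-linear constraints on $u$. This accounts for the full check rank $n-k=\sum_{s\in S}m_s$, with each coset $C_s$ contributing an $m_s$-dimensional block $B_s$ of constraints on $u$.

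Second, writing $A_{k,i}$ as the indicator of a subcube of size $2^{\wt(k)}$ (after bit-reversal) and expanding $x_i^s=\prod_{\ell:s_\ell=1}\sum_r X_{i,r}\beta_r^{2^\ell}$ in the standard digit basis, the subcube sum of $\prod_\ell X_{i,r_\ell}$ equals $2^{\wt(k)-|\{r_\ell\}|}$ when every $r_\ell$ is active in $k$ and $0$ otherwise. In characteristic $2$ this is odd only when $|\{r_\ell\}|=\wt(k)$, which forces $\wt(s)\geq\wt(k)$; hence $g_s(k)=0$ whenever $\wt(k)>\wt(s)$. When $\wt(k)=\wt(s)=t$, the surviving terms form the permanent (equivalently, in characteristic~$2$, the determinant) of a Moore-type submatrix of $(\beta_r^{2^\ell})$, which is nonzero because $\beta_0,\dots,\beta_{m-1}$ are $\F_2$-linearly independent. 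Thus $g_s(k)\neq 0$ for every pair with $\wt(s)=\wt(k)=t$, and the full $\binom{m}{t}\times\binom{m}{t}$ Moore matrix $(g_s(k))$ over these indices is nonsingular over $\F_{2^m}$.

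Finally, define the weight filtration $K^{[t]}:=K\cap\F_2^{\{k:\wt(k)\leq t\}}$. The vanishing lemma places $B_s\subseteq K^{[\wt(s)]}$, so $\dim K^{[t]}\geq\sum_{s\in S,\wt(s)\leq t}m_s$; the Moore nonsingularity forces equality by a top-down induction on $t$, because a nontrivial combination of blocks with $\wt(s')>t$ landing in $K^{[t]}$ would violate the rank of the weight-$t'$ Moore block for some $t<t'\leq\wt(s')$. Because $g_s(k)\neq 0$ for every weight-$t$ index $k$ when $\wt(s)=t$, each fresh $\F_2$-constraint introduced at weight level~$t$ has nonzero entries in every weight-$t$ column and no entries at all in columns of higher weight; the greedy rightmost-pivot echelon form of $V$ must therefore allocate the $\dim K^{[t]}/K^{[t-1]}=\sum_{s\in S,\wt(s)=t}m_s$ new pivots entirely among the weight-$t$ indices, giving $N_t$ as claimed. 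The subtlest step is this last one, since the paper's pivot rule orders columns by integer index rather than by weight; the crucial leverage is the full-support property supplied by Moore nonsingularity, which prevents any weight-$t$ block from being absorbed into pivots at weight strictly less than $t$.
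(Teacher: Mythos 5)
Your first two stages essentially reproduce the first half of the paper's proof: each cyclotomic coset $C_s$ contributes $m_s$ binary constraints, and the multivariate expansion of $x_i^s$ shows these constraints only touch columns $k$ with $\wt(k)\leq\wt(s)$. Combined with the total count $n-k=\sum_{s\in S}m_s$, this is exactly the paper's inequality $\widehat M_t\leq\widetilde M_t$ and yields, by the same induction, the lower bound $N_t\geq\sum_{s\in S_t}m_s$. The genuine gap is in the converse inequality $N_t\leq\sum_{s\in S_t}m_s$, and both ingredients you use for it fail.

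First, the claim that $g_s(k)\neq 0$ whenever $\wt(s)=\wt(k)=t$ is false. The surviving sum is the determinant of $\bigl(\beta_r^{2^\ell}\bigr)$ with $r\in\mathrm{supp}(k)$, $\ell\in\mathrm{supp}(s)$; when the exponents $\ell$ are not consecutive this is not a genuine Moore determinant, and $\F_2$-independence of the $\beta_r$ does not force it to be nonzero. For example, in $\F_{16}$ take $s=5$ (so $\mathrm{supp}(s)=\set{0,2}$; the coset $C_5$ occurs for any design distance $d\geq 7$) and basis elements $\beta_{r_1}=1$, $\beta_{r_2}=\omega$ with $\omega$ a primitive cube root of unity: then $\beta_{r_1}\beta_{r_2}^{4}+\beta_{r_2}\beta_{r_1}^{4}=\omega^4+\omega=0$. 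Second, even granting the (true, but not provable this way) full rank of the weight-$t$ diagonal block, your last step does not follow. Full rank gives $\dim K^{[t]}=\sum_{\wt(s)\leq t}m_s$ and hence that \emph{at least} this many pivots lie in columns of weight $\leq t$; it does not exclude \emph{additional} pivots of weight $\leq t$ arising from combinations that involve blocks $B_{s'}$ with $\wt(s')>t$, because the pivot rule uses the integer order and a column of weight $>t$ can have a smaller index than a column of weight $\leq t$ (e.g. $0111_2<1000_2$), so such a combination may legitimately terminate at a low-weight column. The ``nonzero entry in every weight-$t$ column'' property cannot rescue this even if it held: over $\F_2$ more than one row cannot simultaneously equal the all-ones pattern on the weight-$t$ columns and remain independent there. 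This missing upper-bound direction is exactly what the paper's second half supplies by a primal argument: codewords are evaluations of $\sum_{t\in\mathcal Q\setminus S}\Tr_{m_t}(\gamma_t x^{n-1-t})$, whose degree-$(m-t_0)$ homogeneous parts span a space of dimension $\binom{m}{t_0}-\sum_{s\in S_{t_0}}m_s$, which caps the number of frozen weight-$t_0$ symbols. You need this, or an equivalent code-side argument, to close the proof.
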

\begin{proof}
Consider  parity check equation \eqref{mCheckEq}.  Let $x_i=\sum_{s=0}^{m-1}X_{i,s}\beta_s, X_{i,s}\in\set{0,1}$. Then 
\begin{align*}
x_i^j=\left(\sum_{s=0}^{m-1}X_{i,s}\beta_s\right)^j=\left(\sum_{s=0}^{m-1}X_{i,s}\beta_s\right)^{\sum_{t=0}^{m-1}j_t2^t}\nonumber\\=
\prod_{t=0}^{m-1}\left(\sum_{s=0}^{m-1}X_{i,s}\beta_s^{2^t}\right)^{j_t}
=\sum_{\substack{\wt(e_0^{m-1})\leq\wt(j)\\e_s\in \set{0,1}}} v_{j;e_0^{m-1}}\prod_{s=0}^{m-1}X_{is}^{e_s},
\end{align*}
where $ v_{j;e_0^{m-1}}\in \F_{2^{m_j}}$ are some coefficients.
Hence, any codeword $c_0^{n-1}$ satisfies 
\begin{equation*}
0=\sum_{\substack{\wt(e_0^{m-1})\leq\wt(j)\\e_s\in \set{0,1}}} v_{j;e_0^{m-1}}\sum_{i=0}^{n-1}c_i\prod_{s=0}^{m-1}X_{is}^{e_s},  j\in \tilde S,
\end{equation*}
where $\tilde S=\set{j2^l|j\in S,0\leq l<m_j,j2^{m_j}\equiv j\bmod 2^m-1}$.

It can be seen that the $i$-th row of $A=B_{2,m}F_2^{\otimes m}$ is a sequence of values of various monomials $X^{(a)}=\prod_{s=0}^{m-1}X_s^{a_{m-1-s}}, a_s\in\set{0,1}$ at  point $(X_{i,0},\dots,X_{i,m-1})\in \F_2^m$. Hence, $u_{e'}=\sum_{i=0}^{n-1}c_i \prod_{s=0}^{m-1}X_{i,s}^{e_s}$ is the value of the $e'$-th element of the input vector of the polarizing transformation, where $e'=\sum_{s=0}^{m-1}e_{s}2^{m-1-s}$,\ $e_s\in \set{0,1}$, and $u_{0}^{n-1}=c_0^{n-1}A$, so that 
$$0=\sum_{\substack{\wt(e_0^{m-1})\leq\wt(j)}} v_{j;e_0^{m-1}}u_{e'},  j\in \tilde S $$ 
Any such equation gives rise to $m_j$ equations with coefficients in $\F_2$.
Observe that there are  $\widetilde M_t=\sum_{i=t}^\rho \sum_{s\in S_i}m_s$  equations, which involve symbols $u_{e'}: t\leq \wt(e')\leq \rho$, where $\rho=\max_{0\leq j<d-1}\wt(j)$.  Hence, the number $\widehat M_t=\sum_{i=t}^\rho N_t$ of dynamic frozen symbols $u_i: t\leq \wt(i)\leq \rho,$ is upper bounded by $\widetilde M_t$. It can be also seen that $\widehat M_0=\widetilde M_0$. 

The equality $N_0=1=m_{0}$ holds for any EBCH\ code with $d\geq 2$. Assume that $N_t=\sum_{s\in S_t}m_s$ for all $t<t_0$, so that $\widehat M_{t_0}=\widetilde M_{t_0}$. Since $\widehat M_{t_0+1}=\widehat M_{t_0}-N_{t_0}\leq \widetilde M_{t_0+1}=\widetilde M_{t_0}-\sum_{s\in S_{t_0}}m_s$, one obtains $N_{t_0}\geq \sum_{s\in S_{t_0}}m_s$. Assume that this inequality is strict. 

Any codeword of the considered EBCH code can be represented as a vector of values of polynomial 
\begin{equation}
\label{mTracePoly}
f(x)=\sum_{t\in \mathcal Q\setminus S}\Tr_{m_t}(\gamma_tx^{n-1-t})
\end{equation}
 in distinct points $x_i\in \F_{2^m}$, where  $\gamma_i\in \F_{2^{m_i}}$, $\Tr_m(x)=\sum_{i=0}^{m-1}x^{2^i}$.  
This polynomial can be represented in multivariate form as 
\begin{align}
\label{mBCHMultivariate}
f(X_{0},\dots,X_{m-1})=&\sum_{t\in \mathcal Q\setminus S}\Tr_{m_t}\left(\gamma_t\prod_{j=0}^{m-1}\left(\sum_{l=0}^{m-1}\beta_l^{2^j}X_l\right)^{1-t_j}\right)\nonumber\\
=&\sum_{e_0,\dots,e_{m-1}\in\set{0,1}}u_{e'}\prod_{l=0}^{m-1}X_l^{1-e_l}
\end{align}
where $t=\sum_{j=0}^{m-1}t_j2^j, t_j\in\set{0,1}$, so that $c_i=f(x_i)=f(X_{i,0},\dots,X_{i,m-1})$.
Observe that the $e'$-th row of matrix $A$ can be considered as a table of values of $\prod_{l=0}^{m-1}X_l^{1-e_l}$ in various points of $\F_2^m$.
 Hence, $u_{e'}$ can be considered as input symbols of the polarizing transformation.  

Hence, the set of polynomials $f(X_0,\dots,X_{m-1})$ corresponding to the considered code contains $K_{t_0}={m\choose t_0}-\sum_{s\in S_{t_0}}m_s$ linearly independent polynomials given by \eqref{mBCHMultivariate} of degree $m-t_0$. Observe also that the forms of degree $m-t_0$ of these polynomials are also linearly independent. However, this is not possible since, by assumption, there are $N_t>\sum_{s\in S_{t_0}}m_s$ constraints on the coefficients of these forms. The obtained contradiction proves the theorem.
\end{proof}

The particular  set of dynamic frozen symbol indices $\mF'$  of the EBCH\ code depends on the basis being used. One may enumerate different bases of $\F_{2^m}$ and select the one which minimizes the  SC decoding error probability \eqref{mSCErrProb}. Similar approach was used in \cite{vardy1994maximumlikelihood,kasami1993optimum} to obtain trellis diagrams of EBCH codes.

\begin{figure}
\begin{center}
\includegraphics[width=0.5\textwidth]{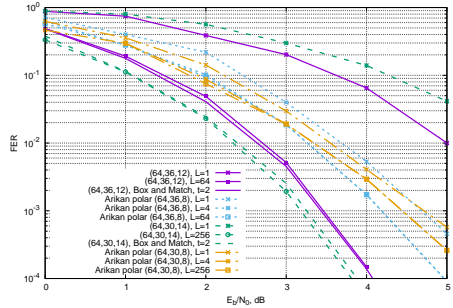}
\end{center}
\caption{Performance of list/stack SC decoding of extended BCH\ codes  }
\label{fDynFrozPerf}
\end{figure}

Theorem \ref{tFrozenWeight} and the existence of a RM supercode  suggest that the  SC  algorithm and its variations may work  for EBCH codes. Unfortunately, experiments show that this is true only for small $n$. Figure \ref{fDynFrozPerf} illustrates the performance of EBCH codes under list/stack SC algorithm with list size $L$ and  box-and-match  \cite{Valembois2004box} algorithm
with reprocessing order $t$, as well as Arikan polar codes. It can be seen that Arikan polar codes far outperform extended BCH codes in the case of list size $L$ equal to $1$ (i.e. classical SC decoding). However, higher minimum distance results in significant performance gain of EBCH codes under box-and-match near-ML\ decoding algorithm. Huge list size $L$ is needed in order to obtain comparable performance under list/stack SC decoding, while
Arikan polar codes achieve the near-ML performance already for  $L=4$.

\section{Polar subcodes}
\label{sPolarSubcodes}

It is possible  to show that the minimum distance of polar codes with Arikan kernel is given by $O(\sqrt{n})$ \cite{hussami2009performance}. This results in quite poor ML decoding performance. 

Exact performance analysis of the list/stack SC decoding algorithm, which
is commonly used to implement near-ML decoding of polar codes, still remains an open problem. It was empirically observed that in the low-SNR region codes with lower SC decoding error probability provide lower error probability under  list SC decoding. However, in the high-SNR region the performance of list/stack SC decoding algorithm depends mainly on code minimum distance. Therefore, we propose to explicitly construct codes with a given minimum distance, which would minimize the SC decoding error probability.

\begin{definition}
Consider  a $q$-ary input memoryless output symmetric channel $W(y|c)$
and an $(n=l^{m},k',d)$ code $\mathcal C'$ over $\F_q$, called parent code. Let $\mF'$ be the set of dynamic frozen symbol indices of $\mathcal C'$ for the case
of kernel $F_l$. An
$(n,k\leq k',\geq d)$ polar subcode $\mathcal C$ of code $\mathcal C'$ is defined as the set of vectors $c_0^{n-1}=u_0^{n-1}B_{l,m}F_l^{\otimes m}$, where $u_0^{n-1}$ simultaneously satisfies the dynamic freezing equations \eqref{mDynFrozen} for code $\mathcal C'$, and additional constraints $u_{s}=0$ for $k'-k$ indices $s\notin \mF'$ with the highest error probabilities $P_{s}$ for a given channel $W(y|c)$.
\end{definition}
Encoding of polar subcodes can be performed as 
\begin{equation}
\label{mPrecoding}
c=xWA,
\end{equation}
where $x$ is an information vector, $W$ is a matrix, such that $WV^T=0$, and $V$ is the dynamic freezing
constraint matrix. This can be considered as pre-coding
the data with some outer code with generator matrix $W$, and encoding its
codeword with a polar code. However, we do not explicitly specify an outer
code for this construction. Instead, we require that the obtained
codeword $c$ should belong to the parent code with sufficiently high minimum
distance. 

Polar codes with CRC \cite{tal2011list} and LDPC outer codes
 \cite{guo2014enhanced} can be considered as
a special case of the proposed construction. However, these codes employ
ad-hoc constraints \eqref{mDynFrozen}. Therefore, it is difficult to
control their minimum distance.

It must be recognized that the SC decoding error probability $P$ given by \eqref{mSCErrProb} of a polar subcode cannot be less than the SC decoding error probability of a classical polar code of the same length and dimension, constructed for the same channel using the same kernel $F_l$. Therefore, polar subcodes provide no advantage with respect to classical polar codes if SC decoding is used. However, significant performance gain under list/stack SC decoding can be obtained. 
Experiments show that for given values of $n,k,d$ polar subcodes with lower $P$ provide lower list/stack SC decoding error probability.  Hence, one should select $\mathcal C'$ so that its set $\mF'$   includes as many as possible indices $j_i$ corresponding to symbols with high error probability $P_{j_i}$.

\subsection{Arikan kernel}
\label{sArikaneBCHSubcode}
\subsubsection{The construction}
We propose to employ  EBCH\ codes as parent ones in the proposed construction
of polar subcodes.
Theorem \ref{tFrozenWeight} implies that the indices of the most of the frozen symbols of EBCH codes have low weight. Bounds \eqref{mArikanBhat1}--\eqref{mArikanBhat2}
imply that the Bhattacharyya parameter of the $i$-th bit subchannel is given
by $Z_{n,i}=O(Z_{1,0}^{2^{\wt(i)}})$. Hence,  employing EBCH codes as parent
ones in the proposed construction enables one to avoid freezing of bit subchannels
with low $Z_{n,i}$. This improves the performance of the obtained code under
SC decoding and its variations.

Observe that increasing minimum distance of the parent code causes more bit
subchannels with low $Z_{n,i}$ to be frozen. In order to keep code dimension
$k$ fixed, one needs to unfreeze some bits subchannels with high   $Z_{n,i}$.
This results in higher SC decoding error probability. This can be compensated
to a certain extent by employing list SC decoding and its variations with
larger list size. Unfortunately,
there are still no analytical techniques for finding a trade-off between
the performance and decoding complexity. We have to use simulations in order
to find  optimal values of the code minimum distance.
\begin{example}
Let us construct a $(16,6,6)$ polar subcode  of $(16,7,6)$ EBCH code considered in Example \ref{ex1676}, by optimizing it for the case of the
binary erasure channel with erasure probability $Z_{1,0}=0.5$. The vector of bit subchannel Bhattacharyya parameters (i.e. symbol erasure probabilities) equals $Z_{16}=(
\underline{0.999}, \underline{0.992}, \underline{0.985}, 0.77,  
\underline{0.96}$, $\underline{ 0.65}, 0.53, 0.1,             
\underline{0.9},   \underline{0.47},  \underline{0.35}$,  $3.7\cdot 10^{-2}, \underline{0.23},            1.5\cdot 10^{-2}, 
7.8\cdot 10^{-3}, 1.5\cdot 10^{-5})$.  Here the values 
corresponding to dynamic frozen symbols of the EBCH code are underlined. 
It can be seen that $u_3$ has the highest erasure probability $0.77$ among
not yet frozen symbols. Therefore, we propose to introduce an additional
constraint $u_3=0$. This is equivalent to removing the
first row from matrices $G_1$ and $G_3$ presented in Example \ref{exGPD}.
\end{example}
\begin{example}
\label{ex1024512}
Consider construction of a $(1024,512)$ code.  There exists a 
$(1024,513,116)$ EBCH code, which cannot  be decoded 
efficiently with (list) SC decoder. On the other hand, the classical polar code 
optimized for AWGN channel with $E_b/N_0=2 dB$ has minimum distance 16.  
One can take a $(1024,893,28)$ EBCH parent code $\mathcal C': RM(5,10)\subset \mathcal C'\subset RM(8,10)$ and freeze $381$ additional bit 
subchannels to obtain a $(1024,512,\geq 28)$ polar subcode with dynamic 
frozen symbols. The specification of the obtained code includes only $f=20$ non-trivial
equations \eqref{mDynFrozen} with $T=111$ terms, so the cost of evaluation of
dynamic frozen symbols is negligible compared to the cost of multiplication
by matrix $A$.
\end{example}
Observe that the SC decoding error probability $P$ of a $(n,k,d)$ polar subcode of any code cannot be less than the SC decoding error probability $\overline P$  for a classical $(n,k)$ polar code constructed for the same channel using
the same kernel. However, the performance of a polar subcode under list/stack SC decoding with sufficiently large list size $L$ may be substantially better.  It was empirically observed that the size of the list $L$ needed to obtain such gain increases with $P     $.
Hence, one needs to quantify the value of $\overline P/P$ needed to obtain a given minimum distance $d$. However, it would be easier to characterize the rate of a polar subcode with a given minimum distance, such that it has the same SC decoding error probability as a given classical polar code of the same length.

 Let  $\overline {\mathcal C}$ be a polar code with kernel $F_2$ of rate $\overline \rho(z)$,
 such that all symbols with  $Z_{n,i}<z$ are not frozen.    Consider now an $(n,k,d)$  polar subcode $\mathcal C$ of rate $\rho(z,d)=k/n$, obtained from  a $(n=2^{m},k'=\beta(m,d)n,d)$ EBCH code  $\mathcal C'$ by freezing all symbols $u_i$ with $Z_{n,i}\geq z$. The set of non-frozen symbols of code  $\mathcal C$ can be represented as  $\Delta(d,z)=\cup_{r=0}^{m-1}\left(\Delta(r,m,z)\setminus \mathcal F_{d,r}'\right)$, where $\Delta(r,m,z)=\set{i|0\leq i<2^m,\wt(i)=r,Z_{2^m,i}<z}$,
and $\mathcal F_{d,r}'$ is the set of dynamic frozen symbol indices $u_e$ of $\mathcal C'$, such that $\wt(e)=r$.  

It is quite difficult to find $|\Delta(r,m,z)|$ analytically, although it can be computed in polynomial time for any specific binary input output symmetric memoryless channel and values $r,m$ \cite{tal2011how}. Therefore, we propose to approximate it by employing an asymptotic expression for the fraction of common non-frozen symbols of a RM code of rate $R(m-r,m)=2^{-m}\sum_{j=0}^{m-r}{m\choose j}$ and a polar code  of length $n=2^m$ and rate $\rho$.
This value was shown in   \cite{hassani2013ratedependent} to converge with $m\rightarrow \infty$ to $\phi(\rho,r,m)=C\min\left(\frac{\rho}{C},R(m-r,m)\right)$, where $C$ is the capacity of the considered channel. Hence, the 
$2^{-m}|\Delta(r,m,z)|\approx \phi(\overline \rho(z) ,r,m)-\phi(\overline \rho(z) ,r+1,m)$. 

Therefore, one obtains 
\begin{equation}
\label{mMaxDynFrozen}
|\Delta(r,m,z)\setminus \mathcal F_{d,r}'|\geq \max(0,|\Delta(r,m,z)|-|\mathcal F_{d,r}'|),
\end{equation}
 so that 
\begin{align*}
\rho(z,d)\geq 2^{-m}\sum_{r=0}^m \max(0,|\Delta(r,m,z)|-|\mathcal F_{d,r}'|)\nonumber\\
\approx \sum_{r=0}^m\max\left(0,\phi(\overline \rho(z) ,r,m)-\phi(\overline \rho(z) ,r+1,m)-N_r2^{-m}\right),
\label{mRateDegr}
\end{align*}
where $N_r$ is given by Theorem \ref{tFrozenWeight}.

\begin{figure}
\includegraphics[width=0.5\textwidth]{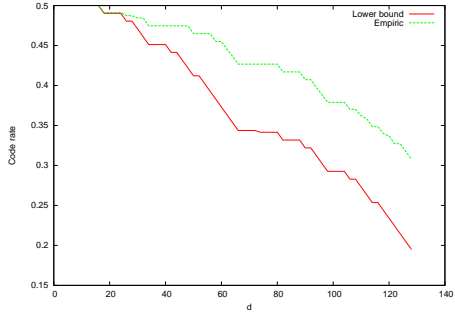}
\caption{Rate of polar subcodes of length 1024}
\label{fRateDist}
\end{figure}
Figure \ref{fRateDist} illustrates this bound together with the actual rate of $(n,k,d)$ polar subcodes of EBCH codes of length $n=1024$. The dimension $k$ of these subcodes was selected so that they achieve approximately the same successive cancellation decoding error probability at $E_s/N_0=-1$ dB as the classical Arikan polar code $(1024,512,16)$ constructed for the same value of $E_s/N_0$. It can be seen that the bound is quite loose. This is both due to \eqref{mMaxDynFrozen}, which assumes that all dynamic frozen symbols induced by the EBCH code correspond to subchannels with the lowest possible Bhattacharyya parameters, and application of an asymptotic expression for approximation of $|\Delta(r,m,z)|$.

It can be seen that the degradation of polar subcode rate with respect to a classical polar code is negligible for $d$ up to twice the minimum distance of the original polar code. 
\subsubsection{ Encoding and Decoding Complexity}
Encoding of the proposed polar subcodes of binary EBCH\ codes can be performed
via \eqref{mPrecoding} with complexity $C_W+\frac{1}{2}n\log n$, where $C_W=T-f$
is the cost of multiplication by matrix $W$, $T$ is the number  of terms
in the right-hand side of non-trivial equations \eqref{mDynFrozen}, and $f\leq
\sum_t N_t$
is the number  of such equations.
Systematic encoding  can be implemented using the approach introduced in \cite{sarkis2015flexible} with complexity $C_W+n\log n$. 

Theorem \ref{tFrozenWeight} implies that the set of dynamic frozen symbol
indices for a parent EBCH\ code includes only the ones with sufficiently small
weight. Furthermore, one can see that a dynamic freezing equation for symbol $u_i$ cannot involve symbols $u_j: \wt(j)>\wt(i)$. On the other hand, most
of the constraints $u_s=0$, imposed on bit subchannels with high error probability
$P_s$, correspond to low-weight integers $s$. This causes $f$ to be much less
than the value $\sum_t N_t$ predicted by Theorem \ref{tFrozenWeight} for the parent code,
so that matrix $W$ appears to be sparse.  This was illustrated in  Example
\ref{ex1024512}.  Hence, the encoding complexity of the proposed polar subcodes
does not exceed that of polar codes with $f$-bit CRC.

Decoding of polar subcodes can be performed using the same algorithms as
classical polar codes, which should be augmented with a subroutine for evaluation of dynamic frozen symbols. Hence, the number of operations with probabilities
or log-likelihood ratios remains the same as in the case of classical polar
codes. However, the cost of bit manipulations increases at least by $O(C_W)$.
 For example, in the case of Tal-Vardy list decoding algorithm and its derivatives, the values $u_j$, which are needed for evaluation of the dynamic frozen
 symbols, are not stored explicitly. One should either introduce for each
 path an additional
 array of size $f$, where the values of dynamic frozen symbols are accumulated,
 or recover $u_j$ from intermediate values. In the first case the decoding
 complexity  increases by $fLn+C_{W}L$ bit operations, where $L$ is the
 list size, since the additional
 arrays need to be copied while cloning the paths. In the second case the
 complexity depends on the specific structure of dynamic freezing constraints.

The sequential decoding algorithm \cite{miloslavskaya2014sequential} and
its block generalization \cite{trofimiuk2015block} were shown to be able
to decode polar codes with very low average complexity and good performance.
These algorithms can be naturally used in the case of polar subcodes.

\subsection{Improved polar subcodes with Arikan kernel}
Let us consider a $(n=2^m,k,d)$ polar subcode constructed as described in Section \ref{sArikaneBCHSubcode}. It can be represented as an IGCC with outer
codes of length $2^s, s<m$. It appears that most outer codes obtained
in this way are classical Arikan polar codes with quite low minimum distance
and high decoding error probability. Therefore we propose to employ the approach
suggested in \cite{trifonov2011generalized}. Namely, we impose the requirement
on outer codes to be $(2^s,k_i,d_i)$ EBCH\ codes (or their subcodes). The parameters $k_i, d_i$ are selected in order to minimize the MSD error probability, which is given by
$$P=1-\prod_{i=0}^{2^{m-s}-1}(1-\pi_i)$$
 under the constraint $\sum_{i=0}^{2^{m-s}-1}k_i=k$.  Here $\pi_i$ denotes
 the decoding error probability of the code utilized at the $i$-th level
 of the IGCC.
These probabilities
can be estimated, for example, using the tangential sphere bound \cite{poltyrev1994bounds}
together with density evolution \cite{tal2011how} or Gaussian approximation \cite{trifonov2012efficient}.

The obtained IGCC can be also represented
via a system of equations \eqref{mDynFrozen}.  The corresponding matrix $V$
is given by $$V=\left(\begin{array}{cccc}
\multicolumn{4}{c}{V'}\\
V_0&0&\dots&0\\
0&V_1&\dots&0\\
\vdots&\vdots&\ddots&\vdots\\

0&0&\dots&V_{2^{m-s}-1}\\
\end{array}\right),
$$
where $V'$ is a constraint matrix for a parent $(2^m,K\geq k,d)$ EBCH\ code, and
$V_i$ are constraint matrices for outer $(2^s,k_i,d_i)$ codes.

The codes obtained in this way are supposed to be decoded by the block sequential
decoding algorithm \cite{trofimiuk2015block} with block size at least $2^s$. This algorithm employs the fast tree-trellis list Viterbi algorithm
 \cite{roder2006fast} for decoding of outer codes of the IGCC. However, more efficient decoding techniques can be designed
 for specific outer codes.
  
The proposed approach can be considered as a generalization of the  construction suggested in \cite{mondelli2014from}. The Mondelli-Hassani-Urbanke codes
can be considered as GCC with inner Arikan codes and outer RM or
polar codes. Since EBCH\ codes provide higher minimum distance, one may
expect the improved polar subcodes to provide better performance.
\subsection{Polar subcodes with EBCH\ kernel}
For $l=2^\mu$ an extended BCH kernel can be obtained as matrix $F_l$, where  $((F_l)_{i+1,1},\dots,(F_l)_{i+1,l-1})$ is a vector of coefficients of $x^jg_{i'}(x),$ where  $g_{i'}(x)$ is a generator polynomial of a $(l-1,l-1-i')$ BCH code, and $j$ is the smallest non-negative integer, such that $i=j+i'$.   Furthermore, one has $(F_l)_{0,0}=1$ and $(F_l)_{i+1,0}=\sum_{j=1}^{l-1}(F_l)_{i+1,j}$. Figure \ref{fKernel32} presents an example of the EBCH\ kernel.
\begin{figure}
$$\begin{pmatrix}\begin{smallmatrix}
1& 0& 0& 0& 0& 0& 0& 0& 0& 0& 0& 0& 0& 0& 0& 0& 0& 0& 0& 0& 0& 0& 0& 0& 0& 0& 0& 0& 0& 0& 0& 0\\ 
1& 1& 0& 0& 0& 0& 0& 0& 0& 0& 0& 0& 0& 0& 0& 0& 0& 0& 0& 0& 0& 0& 0& 0& 0& 0& 0& 0& 0& 0& 0& 0\\ 
1& 0& 1& 0& 0& 0& 0& 0& 0& 0& 0& 0& 0& 0& 0& 0& 0& 0& 0& 0& 0& 0& 0& 0& 0& 0& 0& 0& 0& 0& 0& 0\\ 
1& 0& 0& 1& 0& 0& 0& 0& 0& 0& 0& 0& 0& 0& 0& 0& 0& 0& 0& 0& 0& 0& 0& 0& 0& 0& 0& 0& 0& 0& 0& 0\\ 
1& 0& 0& 0& 1& 0& 0& 0& 0& 0& 0& 0& 0& 0& 0& 0& 0& 0& 0& 0& 0& 0& 0& 0& 0& 0& 0& 0& 0& 0& 0& 0\\ 
1& 0& 0& 0& 0& 1& 0& 0& 0& 0& 0& 0& 0& 0& 0& 0& 0& 0& 0& 0& 0& 0& 0& 0& 0& 0& 0& 0& 0& 0& 0& 0\\ 
1& 1& 0& 1& 0& 0& 1& 0& 0& 0& 0& 0& 0& 0& 0& 0& 0& 0& 0& 0& 0& 0& 0& 0& 0& 0& 0& 0& 0& 0& 0& 0\\ 
1& 0& 1& 0& 1& 0& 0& 1& 0& 0& 0& 0& 0& 0& 0& 0& 0& 0& 0& 0& 0& 0& 0& 0& 0& 0& 0& 0& 0& 0& 0& 0\\ 
1& 0& 0& 1& 0& 1& 0& 0& 1& 0& 0& 0& 0& 0& 0& 0& 0& 0& 0& 0& 0& 0& 0& 0& 0& 0& 0& 0& 0& 0& 0& 0\\ 
1& 0& 0& 0& 1& 0& 1& 0& 0& 1& 0& 0& 0& 0& 0& 0& 0& 0& 0& 0& 0& 0& 0& 0& 0& 0& 0& 0& 0& 0& 0& 0\\ 
1& 0& 0& 0& 0& 1& 0& 1& 0& 0& 1& 0& 0& 0& 0& 0& 0& 0& 0& 0& 0& 0& 0& 0& 0& 0& 0& 0& 0& 0& 0& 0\\ 
1& 1& 0& 0& 1& 0& 1& 1& 0& 1& 1& 1& 0& 0& 0& 0& 0& 0& 0& 0& 0& 0& 0& 0& 0& 0& 0& 0& 0& 0& 0& 0\\ 
1& 0& 1& 0& 0& 1& 0& 1& 1& 0& 1& 1& 1& 0& 0& 0& 0& 0& 0& 0& 0& 0& 0& 0& 0& 0& 0& 0& 0& 0& 0& 0\\ 
1& 0& 0& 1& 0& 0& 1& 0& 1& 1& 0& 1& 1& 1& 0& 0& 0& 0& 0& 0& 0& 0& 0& 0& 0& 0& 0& 0& 0& 0& 0& 0\\ 
1& 0& 0& 0& 1& 0& 0& 1& 0& 1& 1& 0& 1& 1& 1& 0& 0& 0& 0& 0& 0& 0& 0& 0& 0& 0& 0& 0& 0& 0& 0& 0\\ 
1& 0& 0& 0& 0& 1& 0& 0& 1& 0& 1& 1& 0& 1& 1& 1& 0& 0& 0& 0& 0& 0& 0& 0& 0& 0& 0& 0& 0& 0& 0& 0\\ 
1& 1& 1& 1& 1& 0& 1& 0& 1& 1& 1& 1& 1& 0& 0& 0& 1& 0& 0& 0& 0& 0& 0& 0& 0& 0& 0& 0& 0& 0& 0& 0\\ 
1& 0& 1& 1& 1& 1& 0& 1& 0& 1& 1& 1& 1& 1& 0& 0& 0& 1& 0& 0& 0& 0& 0& 0& 0& 0& 0& 0& 0& 0& 0& 0\\ 
1& 0& 0& 1& 1& 1& 1& 0& 1& 0& 1& 1& 1& 1& 1& 0& 0& 0& 1& 0& 0& 0& 0& 0& 0& 0& 0& 0& 0& 0& 0& 0\\ 
1& 0& 0& 0& 1& 1& 1& 1& 0& 1& 0& 1& 1& 1& 1& 1& 0& 0& 0& 1& 0& 0& 0& 0& 0& 0& 0& 0& 0& 0& 0& 0\\ 
1& 0& 0& 0& 0& 1& 1& 1& 1& 0& 1& 0& 1& 1& 1& 1& 1& 0& 0& 0& 1& 0& 0& 0& 0& 0& 0& 0& 0& 0& 0& 0\\ 
1& 1& 0& 1& 0& 1& 0& 1& 1& 0& 1& 1& 0& 0& 1& 0& 0& 0& 1& 1& 0& 1& 0& 0& 0& 0& 0& 0& 0& 0& 0& 0\\ 
1& 0& 1& 0& 1& 0& 1& 0& 1& 1& 0& 1& 1& 0& 0& 1& 0& 0& 0& 1& 1& 0& 1& 0& 0& 0& 0& 0& 0& 0& 0& 0\\ 
1& 0& 0& 1& 0& 1& 0& 1& 0& 1& 1& 0& 1& 1& 0& 0& 1& 0& 0& 0& 1& 1& 0& 1& 0& 0& 0& 0& 0& 0& 0& 0\\ 
1& 0& 0& 0& 1& 0& 1& 0& 1& 0& 1& 1& 0& 1& 1& 0& 0& 1& 0& 0& 0& 1& 1& 0& 1& 0& 0& 0& 0& 0& 0& 0\\ 
1& 0& 0& 0& 0& 1& 0& 1& 0& 1& 0& 1& 1& 0& 1& 1& 0& 0& 1& 0& 0& 0& 1& 1& 0& 1& 0& 0& 0& 0& 0& 0\\ 
1& 1& 1& 1& 0& 0& 1& 0& 0& 0& 1& 0& 1& 0& 1& 1& 1& 1& 0& 1& 1& 0& 1& 0& 0& 1& 1& 0& 0& 0& 0& 0\\ 
1& 0& 1& 1& 1& 0& 0& 1& 0& 0& 0& 1& 0& 1& 0& 1& 1& 1& 1& 0& 1& 1& 0& 1& 0& 0& 1& 1& 0& 0& 0& 0\\ 
1& 0& 0& 1& 1& 1& 0& 0& 1& 0& 0& 0& 1& 0& 1& 0& 1& 1& 1& 1& 0& 1& 1& 0& 1& 0& 0& 1& 1& 0& 0& 0\\ 
1& 0& 0& 0& 1& 1& 1& 0& 0& 1& 0& 0& 0& 1& 0& 1& 0& 1& 1& 1& 1& 0& 1& 1& 0& 1& 0& 0& 1& 1& 0& 0\\ 
1& 0& 0& 0& 0& 1& 1& 1& 0& 0& 1& 0& 0& 0& 1& 0& 1& 0& 1& 1& 1& 1& 0& 1& 1& 0& 1& 0& 0& 1& 1& 0\\ 
1& 1& 1& 1& 1& 1& 1& 1& 1& 1& 1& 1& 1& 1& 1& 1& 1& 1& 1& 1& 1& 1& 1& 1& 1& 1& 1& 1& 1& 1& 1& 1
\end{smallmatrix}\end{pmatrix}$$
\caption{Extended BCH kernel $F_{32}$}
\label{fKernel32}
\end{figure}

Let $C(q,s,m,J)$ be a code over $\F_q$ of length $q^{sm}$, which consists of evaluation vectors of $m$-variate polynomials $\frac{1}{(q^s-1)^m}A(x_0,\dots,x_{m-1})$, such that all their coefficients $A_{j_0,\dots,j_{m-1}}$ are equal to zero for $(j_0,\dots,j_{m-1})\in J$, in various points $(x_0,\dots,x_{m-1})\in \F_{q^s}^m$.
Let $L(h)=\set{(j_0,\dots,j_{m-1})|0\leq j_i<q^s, \sum_i{j_i}>h}$, and $L^*(h)=\set{j=\sum_{i=0}^{m-1}j_iq^{si}|(j_0,\dots,j_{m-1})\in L(h)}$. It was shown in \cite{liesenfeld1993equivalence} that codes $ C(q,s,m,L(h))$ and $C(q,sm,1,L^*(h))$  are equivalent. That is, these codes can be considered as  GCC with inner and outer extended cyclic codes. Furthermore, outer codes in this construction can be recursively decomposed in the same way.
This implies that an EBCH\ code has a supercode, which is equivalent to
a GCC with inner and outer extended cyclic codes.
This is similar to the fact of existence of a RM supercode for any
EBCH code, which was used above to show that EBCH\ codes have ``not so bad'' set of dynamic frozen symbol indices in the case of Arikan polarizing transformation.

Therefore, we propose the following code construction. Let $\mathcal C'$ be an $(2^{\mu m},k',d)$ EBCH code, such that its $t$-th locator is $x_t=\sum_{j=0}^{m-1}\beta_{t_j}\gamma_j$, where $(\gamma_0,\dots,\gamma_{m-1})$ is a basis of $\F_{2^{\mu m}}$ considered as a vector space over $\F_{2^\mu}$,  $t=\sum_{j=0}^{m-1}t_j2^{\mu j}, 0\leq t_j<2^{\mu}$, and $\beta_i$ is the $i$-th element of $\F_{2^\mu}$.   The above described construction of EBCH\ kernel corresponds to the case $\beta_0=0,\beta_i=\alpha^{i-1}, 1\leq i<2^\mu$, where $\alpha$ is a primitive element of $\F_{2^\mu}$.  Then
one can construct  polar subcode of $\mathcal C'$ using the polarizing transformation $A=B_{l,m}F_l^{\otimes
m}$.  

The proposed construction requires one to be able to compute symbol error probabilities
$P_i$. To the best of our knowledge, there are still no analytical techniques
for solving this problem, except for the case of the  binary erasure channel \cite{miloslavskaya2012design}.
Therefore, we use simulations to obtain these values.

The obtained polar subcode can be considered as an instance of the IGCC introduced
in Section \ref{sIGCC}. Indeed, let us consider a subset $\mF_s=\set{j_i\in
\mF|\floor{j_i/l}=s}$ of the set of dynamic frozen symbol indices corresponding
to the $s$-th block, $0\leq s<l^{m-1}$, and let $V^{(s)}$ be the corresponding
$|\mF_s|\times l^m$ submatrix of $V$. It can be assumed without loss of generality
 that $V^{(s)}$ has an identity submatrix in columns with indices in
$\mF_s$, so that $V^{(s)}=(\Delta_s|\underbrace{(\Sigma_s|I)\Pi_s}_{|\mF_s|\times
l}|0)$,
where $\Pi_s$ is a $l\times l$ permutation matrix, and $\Delta_s, \Sigma_s$
are some matrices.  Therefore, one obtains a system of equations 
$$u_{sl^{m-1}}^{sl^{m-1}+l-1}((\Sigma_s|I)\Pi_s)^T=u_0^{sl^{m-1}-1}\Delta_s^T.$$
Its solution is given by $$u_{sl^{m-1}}^{sl^{m-1}+l-1}=v(I|\Sigma_s)\Pi_s+
u_0^{sl^{m-1}-1}\Delta_s^T(0|I)\Pi_s,$$
where $v$ is an arbitrary vector in $\F_2^{l-|\mF_s|}$.  Hence, instead of
successive decoding of symbols $u_{sl^{m-1}},\dots,u_{sl^{m-1}+l-1}$ according
to \eqref{mSCDecisionRuleDF}, one can recover them jointly by decoding in a coset $x_s+\mathbb C_s$, where $\mathbb C_s$ is a code generated by matrix $(I|\Sigma_s)\Pi_sF_l$, and $x_s=
u_0^{sl^{m-1}-1}\Delta_s^T(0|I)\Pi_sF_l$, as shown in Figure \ref{fMSD}.
This enables one  to improve the performance and/or reduce the decoding
complexity.

\begin{figure*}
\subfigure[$L=32$  \label{f1024Arikan32}]{
\includegraphics[width=0.495\textwidth]{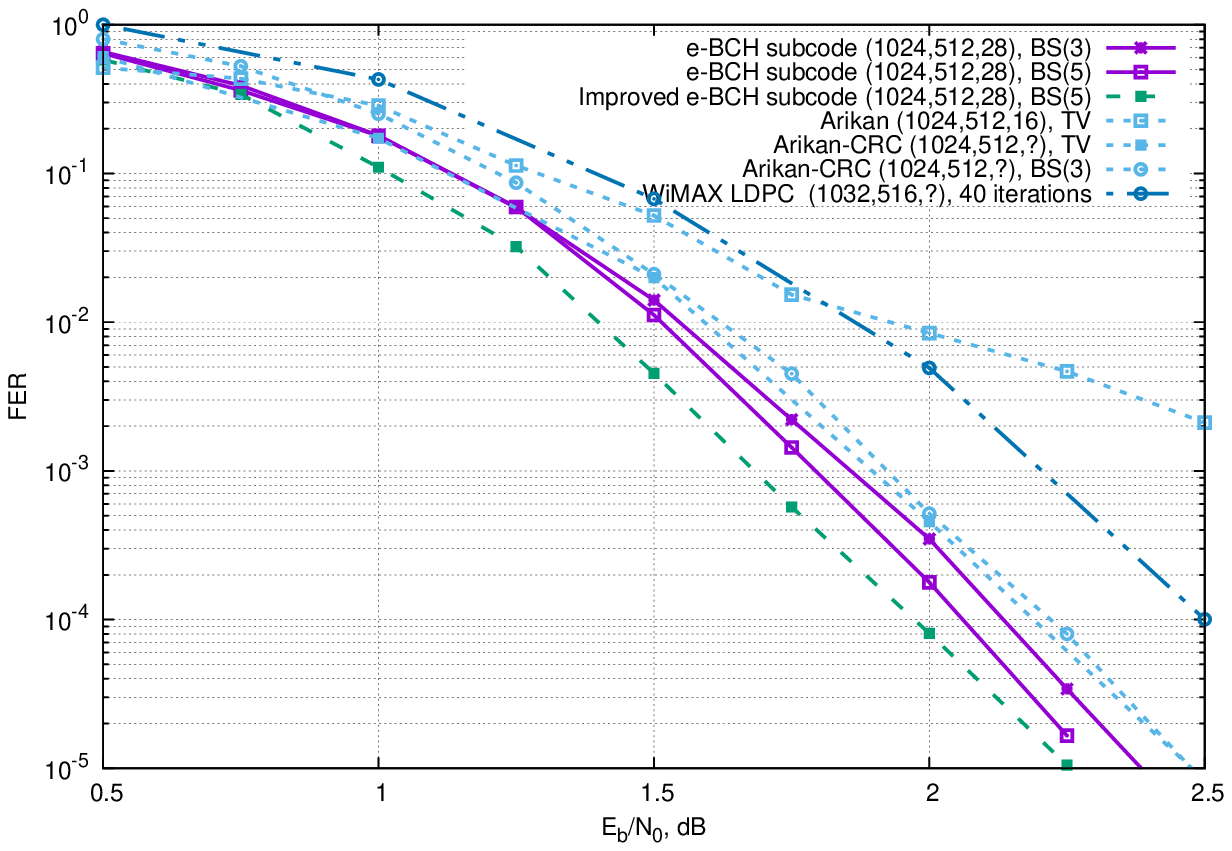}
}\subfigure[$L=256$ ]{
\includegraphics[width=0.495\textwidth]{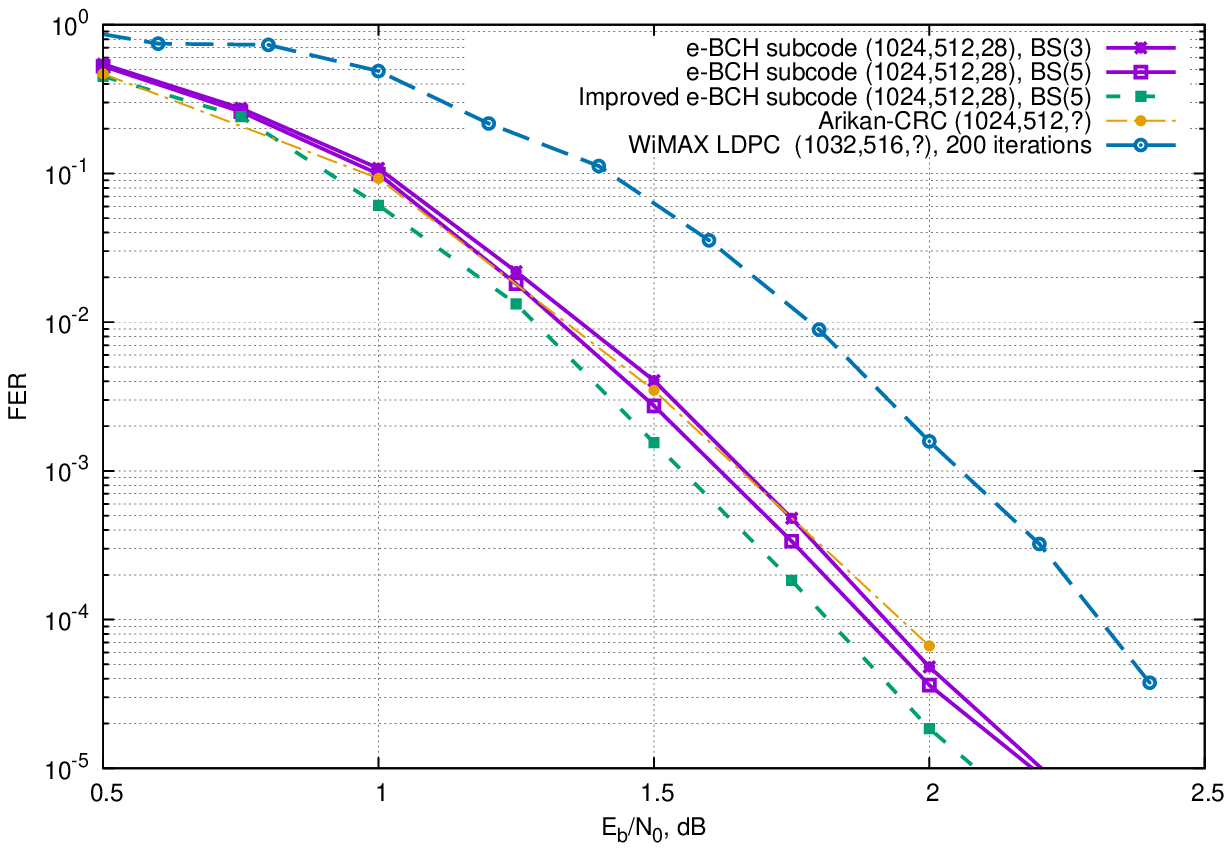}
}
\caption{Performance of polar subcodes with Arikan kernel of length 1024
   }
\label{f1024Arikan}
\end{figure*}

\subsection{Reed-Solomon kernel}
The results of \cite{liesenfeld1993equivalence} allow us to extend the proposed
construction of polar subcodes of EBCH\ codes to the case of Reed-Solomon
(RS)
kernel over $\F_q$. The RS kernel  is given by matrix $F_l$, where $(F_l)_{i,j}=\beta_j^{l-1-i}$,
and $\beta_j$ are some distinct elements of $\F_q, l\leq q$. 
 It was shown in \cite{mori2014source} that for $l\leq
q$ the Reed-Solomon kernel provides the highest possible polarization rate.
However, polar codes with RS kernel still suffer from low minimum distance.

In order to obtain a code with better performance, one can set $l=q$ and represent an  $(n=q^m,k',d)$ EBCH\ code of length $q^m$ over $\F_q$, such that its $t$-th locator is $x_t=\sum_{j=0}^{m-1}\beta_{t_j}\gamma_j$, where $(\gamma_0,\dots,\gamma_{m-1})$ is a basis of $\F_{q^{ m}}$ considered as a vector space over $\F_{q}$,  $t=\sum_{j=0}^{m-1}t_jq^{ j}, 0\leq t_j<q^{m}$,  via a
system of equations \eqref{mDynFrozen}, and introduce additionally $k'-k$
static freezing constraints $u_i=0$ for non-frozen subchannels $W^{(i)}_n$
with the highest error probability.  Again,  simulations
have to be used for performance evaluation of bit subchannels.  The obtained codes can be decoded using the
techniques presented in \cite{trifonov2013polar,trifonov2014binary}.
\begin{example}
Consider construction of a $(16,8,6)$ polar subcode over $\F_4$. The $4\times
4$ Reed-Solomon kernel is given by $$F_4=\begin{pmatrix}
0& 1& 1& 1\\
0& 1& \beta+1& \beta\\
0& 1& \beta& \beta+1\\
1& 1& 1& 1
\end{pmatrix},$$
where $\beta$ is a primitive element of $\F_4$. The check matrix of the 
$(16,9,6)$ \setlength{\arraycolsep}{1.6pt}
parent EBCH\ code is $$H=\left(\begin{array}{cccccccccccccccc}
1&1&1&1&1&1&1&1&1&1&1&1&1&1&1&1\\
0&1&\beta&\beta^2&0&1&\beta&\beta^2&0&1&\beta&\beta^2&0&1&\beta&\beta^2\\
0&0&0&0&1&1&1&1&\beta&\beta&\beta&\beta&\beta^2&\beta^2&\beta^2&\beta^2\\
0&1&\beta^2&\beta&\beta&\beta^2&1&0&1&0&\beta&\beta^2&\beta^2&\beta&0&1\\
0&0&0&0&1&1&1&1&\beta^2&\beta^2&\beta^2&\beta^2&\beta&\beta&\beta&\beta\\
0&1&1&1&\beta&1&0&\beta&\beta&\beta&1&0&\beta&0&\beta&1\\
0&0&0&0&\beta^2&\beta^2&\beta&\beta&\beta^2&\beta&\beta^2&\beta&\beta^2&\beta&\beta&\beta^2
\end{array}\right)$$
This corresponds to the following constraint matrix for
the 
polarizing transformation $A=B_{4,2}F_4^{\otimes 2}$:  
$$V=\left(\begin{array}{cccccccccccccccc}
0&0&0&0&0&0&1&0&0&\beta^2&0&0&1&0&0&0\\
0&0&0&\beta^2&0&0&1&0&0&1&0&0&0&0&0&0\\
0&0&0&0&0&0&0&0&1&0&0&0&0&0&0&0\\
0&0&0&0&1&0&0&0&0&0&0&0&0&0&0&0\\
0&0&1&0&0&0&0&0&0&0&0&0&0&0&0&0\\
0&1&0&0&0&0&0&0&0&0&0&0&0&0&0&0\\
1&0&0&0&0&0&0&0&0&0&0&0&0&0&0&0
\end{array}\right).$$
In the case of transmission of a binary image of the output of the polarizing
transformation $A$ over the AWGN\ channel with $E_s/N_0=-1$ dB, the symbol error
probabilities were found to be $(\underline{0.74},  \underline{0.7},  \underline{0.55}, 0.27, \underline{0.58},  0.33,  0.12,  0.02,  \underline{0.23} , 0.04 , 4 \cdot 10^{-3}  , 
2\cdot 10^{-4} , \underline{0.03} ,4\cdot 10^{-4} , 3\cdot 10^{-6}   , <10^{-6})$. Hence,
we propose to set additionally $u_5=0$.
 \end{example}

\section{Numeric results}
\label{sNumeric}

In this section we present simulation results illustrating the performance
of proposed polar subcodes of EBCH codes in the case of AWGN channel and BPSK modulation.
For comparison, we present also the results for the
case of classical polar codes with the corresponding kernels, polar codes with Arikan kernel and CRC-16 (Arikan-CRC) \cite{tal2011list}, LTE turbo
code, as well as LDPC codes specified in WiMAX and CCSDS standards. For polar subcodes with Arikan kernel we have used the block sequential (BS(s)) \cite{trofimiuk2015block} decoding algorithm\footnote{For $s\leq 3$ the block sequential decoding algorithm
provides slightly inferior performance compared to the probability-domain implementation
of the Tal-Vardy list decoding algorithm with the same list size $L$, but requires
much smaller number of arithmetic operations.},
where $2^s$ is the length of outer codes in the IGCC representation of the
corresponding polar subcode.  For polar codes with the BCH\ kernel, the sequential
decoding algorithm \cite{miloslavskaya2014sequentialBCH} was used, which
is based on the order-statistics soft-input hard-output decoding of the component
codes. Both probability-domain
implementation of the Tal-Vardy
list decoding algorithm (TV) and the block sequential decoding algorithm were used for decoding of polar codes with Arikan kernel an CRC. 
Observe that in the case of polar codes with CRC the block sequential decoding algorithm provides slightly worse performance compared to the original Tal-Vardy
algorithm, but has much lower complexity. Belief propagation
algorithm with flooding schedule was used for decoding of LDPC codes.

\begin{figure}
\includegraphics[width=0.5\textwidth]{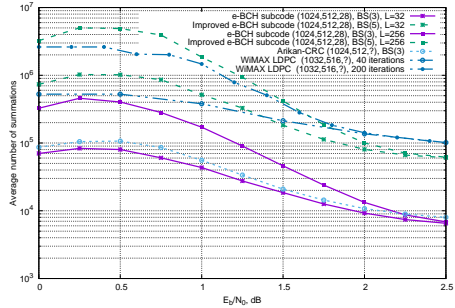}
\caption{Decoding complexity for polar subcodes with Arikan
kernel}
\label{fArikanComplexity}
\end{figure}

Figure \ref{f1024Arikan} illustrates the performance of codes\footnote{In order to ensure reproducibility of the results, we have set up a web site
\url{http://dcn.icc.spbstu.ru/index.php?id=polar} containing the specifications
of the considered polar subcodes with Arikan kernel.} of length $\approx
1024$. It can be seen that polar subcodes of EBCH codes provide significant performance
gain with respect to the classical polar codes of the same code length
and dimension. Furthermore, they outperform polar codes with Arikan kernel
and CRC. Observe that increasing $s$ in the case of the block sequential
decoding algorithm, i.e. employing in the decoder a representation of a polar
subcode as an IGCC with longer outer codes, results in better performance.
The best performance is achieved by improved polar subcodes, where outer
EBCH\ codes of length $32$ were selected so that the MSD
error probability of the corresponding IGCC is minimized.

Figure \ref{fArikanComplexity} illustrates the average number of summation
operations performed by the block sequential and belief propagation decoding
algorithms for the case of polar subcodes and LDPC codes, respectively. Observe
that decoding polar subcodes requires slightly lower average number of operations compared to polar codes with CRC, since the dynamic freezing constraints prevent
the sequential decoder from constructing wrong paths up to the final phase of decoding.

  It
can be also seen that for $s=3$ decoding of polar codes requires 10 times less
operations compared to LDPC codes. For $s=5$ the complexity becomes
comparable. Furthermore, the average number of operations for the case of
$L=256,s=3$ is less than in the case of $L=32,s=5$. From these results one
may conclude that it is more advantageous to increase $L$ instead of $s$.
However, increasing $s$ enables one to use the improved construction
of polar subcodes. We also believe that the block sequential decoding algorithm can be further simplified by employing more efficient decoding algorithms
for outer EBCH codes.

\begin{figure*}
\subfigure[rate $1/3 $]{\includegraphics[width=0.495\textwidth]{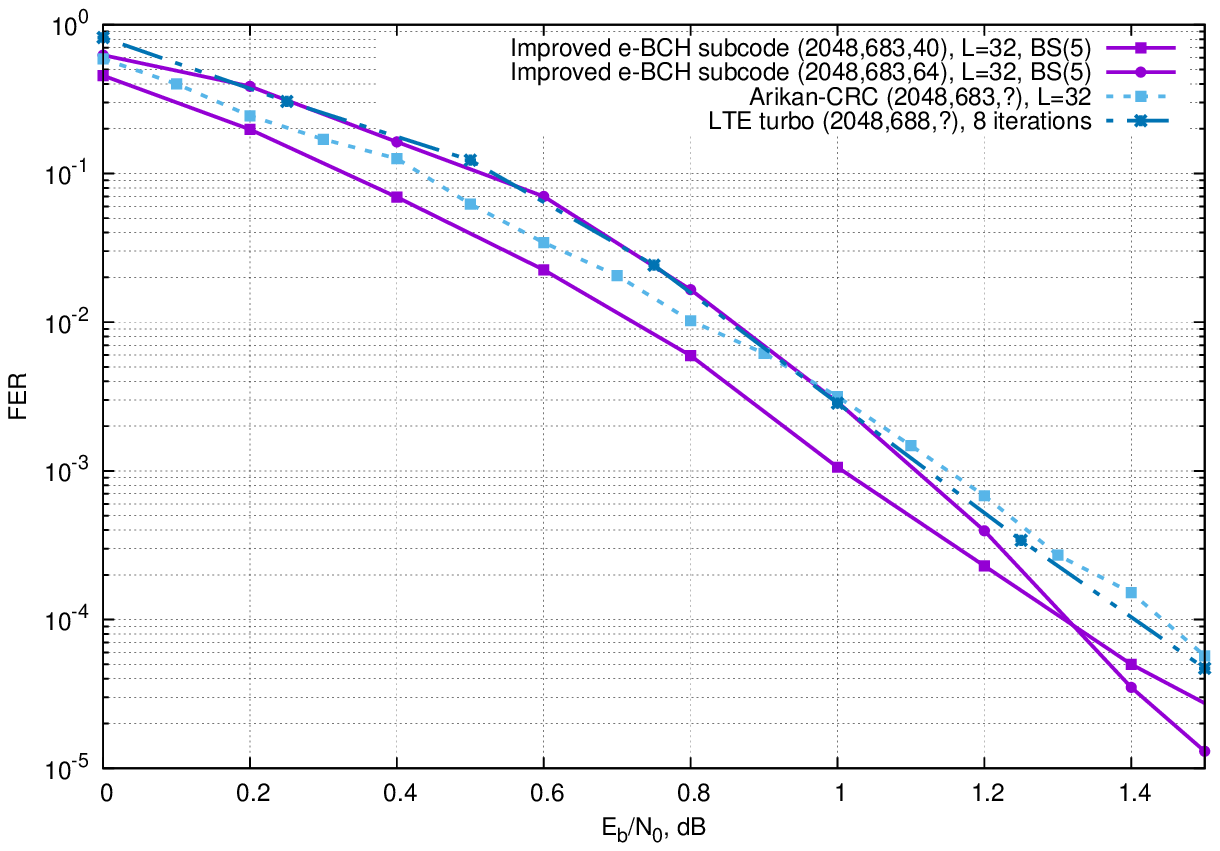}
}
\subfigure[rate $1/2  $ \label{f2048_12}]{\includegraphics[width=0.495\textwidth]{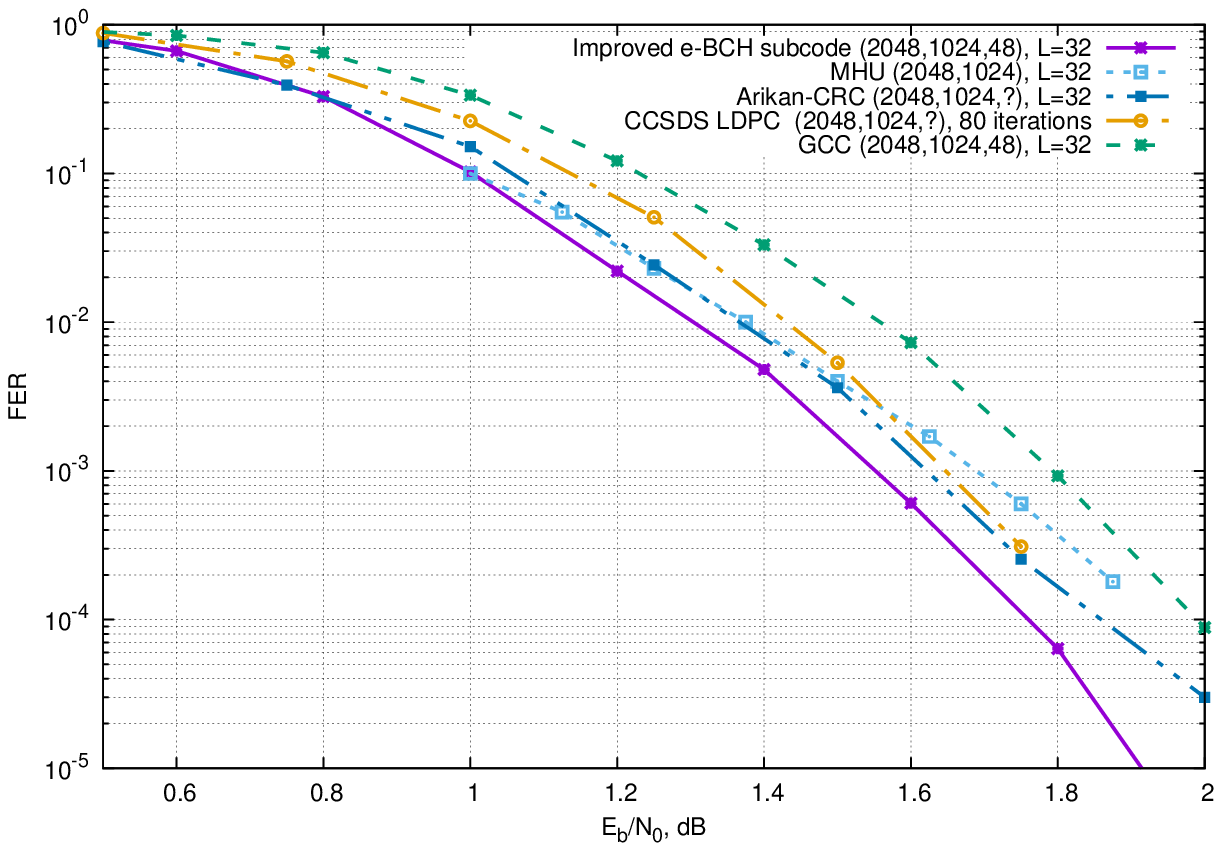}
}
 
\caption{Performance of codes of length 2048    }
\label{f2048}
\end{figure*}

Figure \ref{f2048} presents the performance of codes of length $2048$.
It can be seen that the proposed improved polar subcodes with Arikan kernel provide substantially
better performance compared to the case of LDPC and turbo codes. Observe
that increasing minimum distance of the polar subcode results in better performance
in the high SNR region. For comparison,
we provide also the results for the MHU construction, reproduced
from \cite{mondelli2014from}. As it may be expected, the improved polar subcode,
which employs EBCH\ outer codes, provides better performance than the MHU
code, which employs outer RM and
Arikan polar codes. For comparison, we report also results for the case of
a GCC with outer EBCH\ codes of length $32$, which was obtained as described in \cite{trifonov2011generalized}, and decoded with  the block sequential
algorithm. It appears that some of the outer codes of the IGCC corresponding
to the improved polar subcode, which correspond to good bit subchannels, have higher rate, while those corresponding to bad bit subchannels have lower
rate than in the case of the classical GCC optimized for the same SNR. This
causes the performance of the improved polar subcode to be better than that
of the classical GCC.

\begin{figure*}
\subfigure[EBCH kernel \label{f1024BCH} ]{\includegraphics[width=0.495\textwidth]{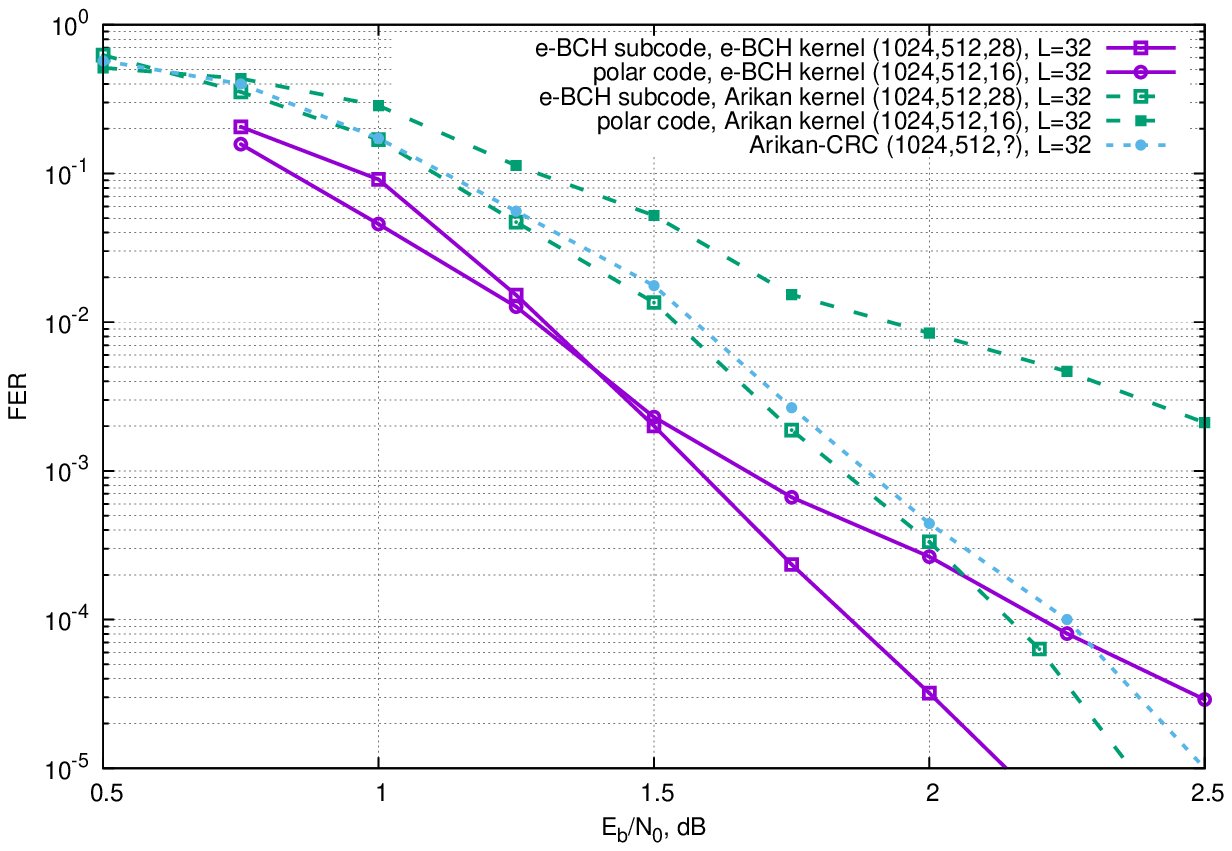}}
\subfigure[Reed-Solomon kernel \label{fRS} ]{\includegraphics[width=0.495\textwidth]{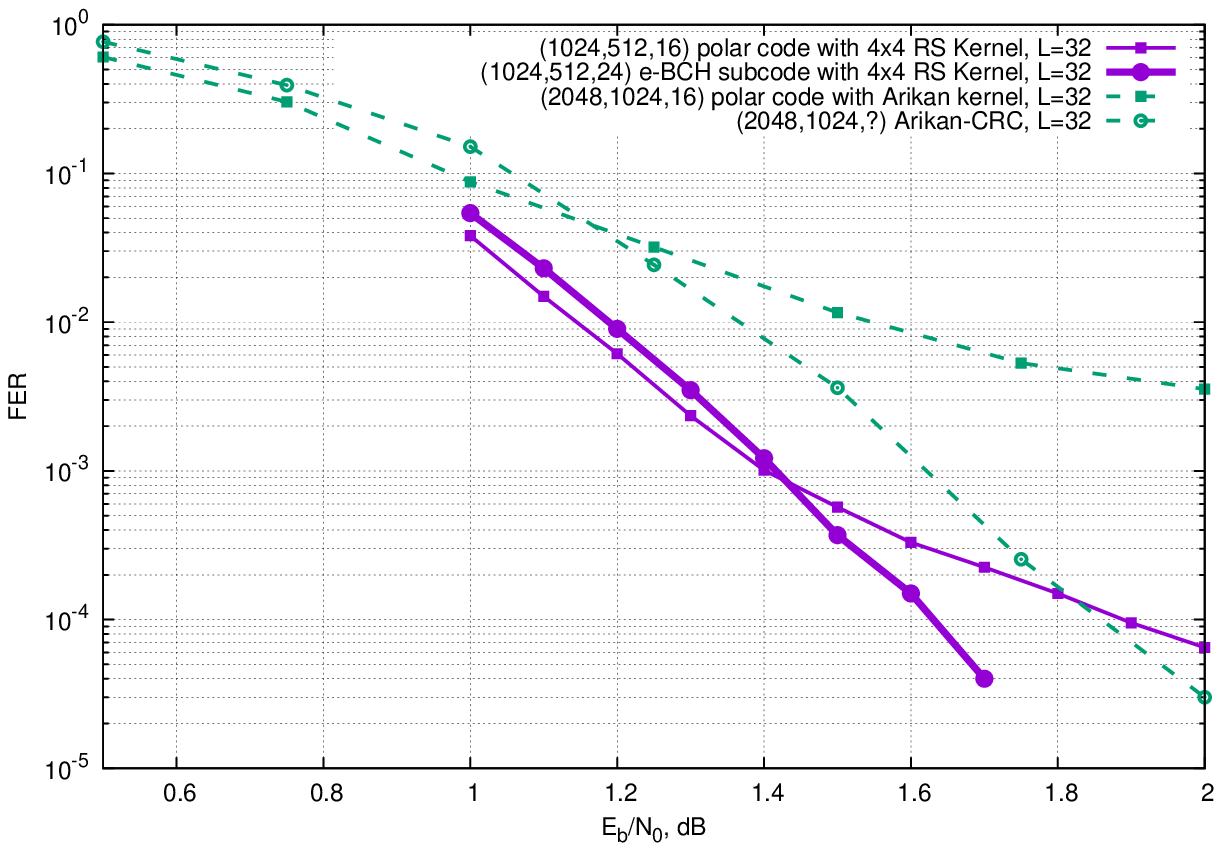}}
\caption{Performance of  codes with non-Arikan kernels}
\label{fNonArikan}
\end{figure*}

Figure \ref{f1024BCH} presents the performance of  codes with the EBCH\
kernel $F_{32}$. It can be seen that these codes outperform those with Arikan kernel. For the case of polar subcodes,  even better performance can be obtained by increasing the list size $L$ at the cost of higher decoding complexity. Figure \ref{fRS} presents the performance of the binary image of polar codes
and polar subcodes with $4\times 4$ Reed-Solomon kernel over $\F_{2^2}$. It can be seen that
classical polar codes with Reed-Solomon kernel have quite low minimum distance,
similarly to the case of Arikan kernel, but still provide better performance
compared to a polar code with Arikan kernel with comparable parameters. In both cases employing the proposed
construction of polar subcodes of EBCH\ codes results in  improved minimum distance and even better performance compared to the codes presented in Figures \ref{f1024Arikan32} and \ref{f2048_12}.

% \begin{figure}
% \includegraphics[width=0.5\textwidth]{plots/16384_8192.eps}
% \caption{Performance of codes of length 16384}
% \label{f16384}
% \end{figure}
%Similar results are observed for codes of length  $\approx 16384$, as shown
%in Figure \ref{f16384}.

\section{Conclusions}
In this paper the construction of polar subcodes of linear block codes was
introduced, which is based on the concept of dynamic frozen symbols. It enables one to obtain codes with higher minimum distance than
classical polar codes, which can still be efficiently decoded using the derivatives of the list successive cancellation algorithm.    
Although we do
not have a proof that the proposed codes achieve the channel capacity,
they  were shown to outperform some of the existing LDPC and turbo codes
of moderate lengths.
Many  existing constructions based on polar codes, such as polar codes with
CRC, can be considered
as a special case of the proposed polar subcodes. 

Unfortunately, due to lack of analytical techniques for predicting the performance
of list/stack SC decoding algorithms, heuristical methods were used in this paper  to construct the codes. Any progress in the performance analysis of
these algorithms may lead to design of better codes. Another way to improve
the performance of the proposed codes is to use longer outer EBCH codes.
This, however, requires development of efficient list soft decision decoding algorithms for them.

Furthermore, an extension of the concept of generalized concatenated codes
was provided, as well as a new method for representing linear block codes
in a form, which enables application of the SC 
algorithm and its variations for their decoding. This approach enables one
to construct polar subcodes with improved performance, as well as a more efficient decoding algorithm for them. It allows also 
near-ML decoding of short Reed-Solomon codes \cite{trifonov2014successiveEng,trifonov2014successive,miloslavskaya2014sequentialRS}.

\section*{Acknowledgements}
The authors thank R. Morozov for help in running simulations for polar
codes with Reed-Solomon kernel.

The authors thank the anonymous reviewers and the Guest Editor for their helpful comments, which
have greatly improved the quality of the paper.

\bibliographystyle{IEEETran}
% Generated by IEEEtran.bst, version: 1.13 (2008/09/30)

\end{document}